\newcommand{\R}{\mathbb{R}}
\newcommand{\E}{{\mathbb{E}}}
\newcommand{\Var}{{\rm Var}}
\newcommand{\Cor}{{\rm Cor}}
\newcommand{\GOI}{{\rm GOI}}
\newcommand\numberthis{\addtocounter{equation}{1}\tag{\theequation}}
\def\R{{\mathbb R}}
\def\E{{\mathbb E}}
\newcommand{\Desc}[2]{\State \makebox[5em][l]{#1}#2}
\newtheorem{thm}{Theorem}%[section]
\newtheorem{lemma}{Lemma}%[section]
\newtheorem{prop}{Proposition}%[section]
\theoremstyle{definition}
\theoremstyle{remark}
\newtheorem{example}{Example}
\begin{document}

\title {On the peak height distribution of non-stationary Gaussian random fields: 1D general covariance and scale space}
\author{Yu Zhao$^1$ \and Dan Cheng$^3$ \and Samuel Davenport$^1$ \and Armin Schwartzman$^{1,2}$}
\date{%
    $^1$Division of Biostatistics, \\ Herbert Wertheim School of Public Health and Human Longevity Science, \\
University of California San Diego, 9500 Gilman Dr., La Jolla, CA 92093, USA \\%
    $^2$Halicioǧlu Data Science Institute, \\
University of California San Diego, 9500 Gilman Dr., La Jolla, CA 92093, USA \\%
    $^3$School of Mathematical and Statistical Sciences, \\ Arizona State University, 900 S. Palm Walk, Tempe, AZ 85281, USA \\[2ex]%
}
\maketitle

\begin{abstract}
    We study the peak height distribution of certain non-stationary Gaussian random fields. The explicit peak height distribution of smooth, non-stationary Gaussian processes in 1D with general covariance is derived. The formula is determined by two parameters, each of which has a clear statistical meaning. For multidimensional non-stationary Gaussian random fields, we generalize these results to the setting of scale space fields, which play an important role in peak detection by helping to handle peaks of different spatial extents. We demonstrate that these properties not only offer a better interpretation of the scale space field but also simplify the computation of the peak height distribution. Finally, two efficient numerical algorithms are proposed as a general solution for computing the peak height distribution of smooth multidimensional Gaussian random fields in applications.

\end{abstract}

\begin{keywords} %alphabetical order
Gaussian random field, height density, Kac–Rice formula, image analysis, peak detection. 
\end{keywords}

\section{Introduction}

Peak inference is an important statistical problem in random field theory (RFT) and has been applied to multiple domains, such as neuroimaging (\citealp{Worsley2004}; \citealp{Schwartzman2019}) and statistical cosmology (\citealp{BERTSCHINGER2001}). Researchers have made significant efforts to derive formulas for the peak height distribution (\citealp{Lindgren1972}; \citealp{AZAIS2008}; \citealp{Extreme}; \citealp{Bernoulli}). The currently known exact formulas for the peak height distribution in 1D, 2D, and 3D hold for isotropic fields (\citealp{Bernoulli}) or apply to nonstationary fields that are obtained from isotropic fields by diffeomorphic transformations of the domain (\citealp{CHENG2020108672}). However, the formulas do not apply to general stationary fields, let alone general nonstationary ones. Even stationarity can be difficult to check (\citealp{Eklund2016}) and does not hold in many application settings (\citealp{Worsley1996Scale}; \citealp{Foody2004}). In order to apply RFT based peak inference more generally, there is a need to calculate the peak height distribution under non-stationarity.

Let $\{X(t): t \in \mathbbm{R}^N\}$ be a real-valued, $N$ dimensional, $C^2$ Gaussian random field. Throughout this paper, we always assume $X(t)$ satisfies the conditions (C1) and (C2) in \citet{Extreme} unless stated otherwise. For stationary random fields, the statistical properties, including the mean, variance, and covariance between points remain the same under translation across the domain. Any spatially varying statistical property leads to non-stationarity. For example, $X(t)$ is non-stationary if the mean function $\mu(t) = \E[X(t)]$ is non-constant. Non-centered Gaussian random fields (i.e. with non-zero mean function) are frequently considered in data applications (\citealp{Hayasaka2007}; \citealp{Annals}), and their peak height distribution has been studied in various RFT-based power calculation works (\citealp{Durnez16};  \citealp{Zhao2023}). In this paper, we only consider centered Gaussian random fields (i.e. with zero mean function), and focus on non-stationary Gaussian random fields with general covariance structures, including non-constant variance, spatially varying correlation, or both.

\begin{sloppypar}
The scale space random field (\citealp{Siegmund1995}) is an example of a non-stationary random field with a spatially varying correlation and has a number of real data applications (\citealp{BERTSCHINGER2001}; \citealp{Worsley1996}). One setting in which the scale space appears is as follows. In peak detection, kernel smoothing is commonly applied to improve the signal-to-noise ratio. To maximize the detection power, the kernel bandwidth should be chosen to match the size of the kernel to that of the peak. However, when various peaks have different shapes and sizes, there does not exist a single smoothing bandwidth to match all the peaks. To handle peaks of different spatial extents, it is natural to consider the bandwidth as an extra parameter and search for peaks in the scale-location space. The resulting random field which is defined in the scale-location space is called the scale space random field. Since the smoothness of the field varies with the bandwidth of the kernel, the scale space random field is non-stationary even if it is stationary at any fixed scale. In this paper, we study the scale space Gaussian field as a prototypical example of non-stationarity.
\end{sloppypar}

For Gaussian random fields, the Kac-Rice formula (see  Theorem 11.2.1 of \citealp{Adler2007}) is a powerful tool to calculate the expected number of critical points and the peak height distribution. With the Kac-Rice formula, it can be shown (see \citealp{Extreme}) that the peak height distribution of $X(t)$, i.e. the probability that the height of $X(t)$ at point $t$ exceeds the fixed threshold $u$ given that the point $t$ is a local maximum of $X(t)$, can be calculated as
\begin{equation}\label{eq:point}
\begin{split}
F_t(u)&=\frac{\E[|{\rm det} \nabla^2 X(t)|\mathbbm{1}_{\{\nabla^2 X(t) \prec 0\}}\mathbbm{1}_{\{X(t)>u\}} | \nabla X(t)=0]}{\E[|{\rm det} \nabla^2 X(t)|\mathbbm{1}_{\{\nabla^2 X(t) \prec 0\}} | \nabla X(t)=0]}.
\end{split}
\end{equation} 
In this paper, we first derive the exact formula for the peak height density (the derivative of \eqref{eq:point}) of 1D ($N = 1$) centered, smooth Gaussian processes with general covariance and compare it with that for stationary Gaussian processes (\citealp{Bernoulli}) and unit-variance Gaussian processes (\citealp{Cheng2023}). In practical applications it is often necessary to consider random fields in higher dimensions. For example, RFT based brain imaging analyses use 3D random fields due to the nature of fMRI data (see \citealp{Worsley1992}; \citealp{Worsley1996}). For dimensions $N >1$, as we shall see, the Kac-Rice formula involves computing the conditional expectation of the determinant of the Hessian matrix of $X(t)$, which is a Gaussian random matrix. There is no analytical solution to this without making further restrictive assumptions. In particular, the explicit evaluation of the peak height distribution of isotropic Gaussian random fields has been studied in \citet{Bernoulli} with the help of Gaussian random matrix theory, but there is no exact formula for general non-isotropic Gaussian random fields. 

One approach to address this issue is to connect the random field of interest $X(t)$ to some other random field $Y(t)$ with known peak height distribution. For example, \citet{CHENG2020108672} have proved that the peak height distribution remains the same under diffeomorphic transformations of the domain, i.e. $X(t) = Y(f(t))$ where $f$ is a diffeomorphic map. Another approach is to simplify the calculation by making practical assumptions, for example, assuming the field has constant  spatial smoothness (\citealp{Eklund2016}). However this assumption is still rather restrictive and is likely to be violated in real data. 

In this paper, we propose to compute the peak height distribution using numerical approximations. We introduce two convenient algorithms that compute the peak height distribution by evaluating the Kac-Rice formula numerically, which work for any sufficiently smooth Gaussian random field. When considering the choice between the two options, it is worth noting that the first algorithm is faster but tends to be less accurate for large thresholds. The second algorithm, on the other hand, 
is slower but achieves better accuracy for large thresholds. Overall, these algorithms achieve higher accuracy with much less computing time compared to the direct Monte Carlo simulation approach, which simulates the field itself. 

The rest of the paper is organized as follows. In Section \ref{sec:1d_nonstationary}, we derive the general formula for the peak height distribution of non-stationary Gaussian processes in 1D with general covariance and show examples. In Section \ref{sec:scale_space_field}, we consider the peak height distribution of multidimensional non-stationary Gaussian random fields, discuss the properties of the scale space field, and prove the peak height distribution of the scale space field remains the same across the domain. Section \ref{sec:num_alg} presents two numerical algorithms that can be used to compute the peak height distribution of smooth Gaussian random fields. In Section \ref{sec:simulation}, we validate the formulas and properties derived in this paper and evaluate the performance of the numerical algorithms in simulation.

Software to implement the peak height distribution is available in the RFTtoolbox package (\citealp{RFTtoolbox}) and code to reproduce the analyses and simulation results is available at \url{https://github.com/YuZhao-UCSD/NonstationaryGRF}.

\section{1D non-stationary Gaussian processes with general covariance}\label{sec:1d_nonstationary}

In this section, we start from the simplest non-stationary Gaussian random field, the 1D non-stationary Gaussian process with general covariance. Our aim is to derive the explicit form of the peak height distribution by evaluating the Kac-Rice formula \eqref{eq:point}, and compare it with the peak height distribution of stationary Gaussian processes. 

Assuming that the random process \{$X(t): t\in \mathbbm{R}$\} is non-stationary, the covariance between any two points may depend on both their location and the distance between them. As a generalization of the covariance function under the stationarity assumption (\citealp{AZAIS2008}), we write the covariance function of a non-stationary Gaussian process as
\[
\E[X(t - d/2)X(t + d/2)] = h(t,\tau = d^2)
\]
for an appropriate function $h(\cdot):[0,\infty) \to \mathbbm{R}$. Denote
\begin{equation} \label{def:cov_derv}
    h_0(t) = h(t,0), \quad h'_1(t) = \left.\frac{\partial h}{\partial t}  \right|_{\tau = 0}, \quad \left.h'_2(t) = \frac{\partial h}{\partial \tau}\right|_{\tau = 0}, \quad 
    \left.h''_{12}(t) = \frac{\partial^2 h}{\partial t \partial \tau}\right|_{\tau = 0},\: \text{etc}.
\end{equation}
The variance-covariance matrix of $(X(t),\: X'(t),\: X''(t))$ can be derived using (5.5.5) in \citet{Adler2007}. The lemma below is a generalization of Lemma 3.2 in \citet{Bernoulli} for non-stationary Gaussian processes. 

\begin{lemma}\label{lem:X_cov}
Let \{$X(t), t \in \mathbb{R}$\} be a centered, smooth 1D Gaussian process. Then for each $t \in \mathbbm{R}$,
\begin{align*}
    \Var(X(t)) & = h_0(t), \\
    \Var(X'(t)) & = \frac{1}{4}h''_{11}(t) - 2h'_2(t), \\
    \Var(X''(t)) & = \frac{1}{16}h''''_{1111}(t) - h'''_{112}(t) + 12h''_{22}(t), \\
    \E[X(t)X'(t)] &= \frac{1}{2}h'_1(t),\\
    \E[X(t)X''(t)] &= \frac{1}{4}h''_{11}(t) + 2h'_2(t), \\
    \E[X'(t)X''(t)] &= \frac{1}{8}h'''_{111}(t) - h''_{12}(t).
\end{align*}
\end{lemma}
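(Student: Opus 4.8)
The plan is to compute each entry of the variance-covariance matrix of $(X(t), X'(t), X''(t))$ directly from the covariance function by repeated differentiation, following the approach of formula (5.5.5) in \citet{Adler2007}. The central device is to write everything in terms of the two-point covariance $R(s,t) = \E[X(s)X(t)]$ and its partial derivatives, and then translate from the $(s,t)$-coordinates into the $(t,\tau)$-coordinates in which $h$ is defined. First I would record the general fact that for a smooth Gaussian process the covariance between mixed derivatives is obtained by differentiating $R$ in each argument separately, i.e. $\E[X^{(i)}(s)X^{(j)}(t)] = \partial_s^{\,i}\partial_t^{\,j} R(s,t)$, and then set $s=t$ at the end. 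This reduces the entire lemma to a bookkeeping exercise in the chain rule.

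The key technical step is the change of variables. Writing $s = t - d/2$ and $t' = t + d/2$, the defining relation gives $R(t-d/2,\,t+d/2) = h(t,\tau)$ with $\tau = d^2$. I would introduce the new coordinates $(t, d)$ with $t = (s+t')/2$ and $d = t'-s$, express the operators $\partial_s$ and $\partial_{t'}$ in terms of $\partial_t$ and $\partial_d$ via
\[
\partial_s = \tfrac{1}{2}\partial_t - \partial_d, \qquad \partial_{t'} = \tfrac{1}{2}\partial_t + \partial_d,
\]
and then account for the further change from $d$ to $\tau = d^2$. The diagonal $s = t'$ corresponds to $d = 0$, hence $\tau = 0$, which is exactly where the quantities $h_0, h'_1, h'_2, h''_{11}, h''_{12}, h''_{22}$, etc., in \eqref{def:cov_derv} are evaluated. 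The main subtlety — and the step I expect to be the real obstacle — is handling the map $d \mapsto \tau = d^2$ at $d = 0$, since $\partial/\partial d = 2d\,\partial/\partial\tau$ vanishes at the diagonal while $\partial^2/\partial d^2 = 2\partial/\partial\tau + 4d^2\partial^2/\partial\tau^2$ contributes the factor of $2$ (and higher even derivatives produce the constants $12$, etc.). Keeping track of which $d$-derivatives survive at $d=0$, and correctly collecting the resulting $\tau$-derivatives with their combinatorial factors, is where the numerical coefficients $\tfrac14, 2, \tfrac1{16}, 12, \tfrac18$ are generated.

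Concretely, I would compute each required entry as follows. For $\Var(X(t)) = R(t,t)$ one simply sets $d=0$, giving $h_0(t)$. For the cross terms and variances of derivatives, I would apply the appropriate combination of $\partial_s$ and $\partial_{t'}$ to $h(t,d^2)$, expand using the operator identities above, and evaluate at $d=0$; odd powers of $d$ drop out by the evenness built into the $\tau = d^2$ parametrization, which is precisely why only even-order $\tau$-derivatives appear. For instance $\E[X(t)X'(t)] = \partial_{t'}R\big|_{d=0}$ produces $\tfrac12 h'_1(t)$ from the $\tfrac12\partial_t$ part (the $\partial_d$ part killing the variance term at $d=0$), while $\Var(X'(t)) = \partial_s\partial_{t'}R\big|_{d=0}$ generates both a $\tfrac14 h''_{11}$ term from $\tfrac14\partial_t^2$ and a $-2h'_2$ term from the $-\partial_d^2$ acting through $\tau = d^2$. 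The higher-order entries $\Var(X''(t))$ and $\E[X'(t)X''(t)]$ follow the same scheme with more terms.

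Since this is a direct generalization of Lemma 3.2 of \citet{Bernoulli}, I would finally sanity-check the result by specializing to the stationary case, where $h$ depends on $\tau$ alone: then all $t$-derivatives $h'_1, h''_{11}, h'''_{111}, h''''_{1111}$ vanish, the cross-covariances $\E[X(t)X'(t)]$ and $\E[X'(t)X''(t)]$ collapse to $0$, and the variances reduce to the familiar spectral-moment expressions, confirming the coefficients are correct.
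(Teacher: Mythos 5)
Your proposal is correct and follows essentially the same route as the paper's proof: both reduce the lemma to Adler's identity (5.5.5) expressing derivative covariances as mixed partials of $R(s,t)$, then push the chain rule through the midpoint/squared-difference parametrization $R = h\bigl(\tfrac{s+t}{2},(t-s)^2\bigr)$ and evaluate on the diagonal, where the factors $\tfrac14$, $2$, $12$, $\tfrac18$ arise exactly as you describe. The only difference is organizational --- you package the chain rule as operator identities $\partial_s = \tfrac12\partial_t - \partial_d$, $\partial_{t'} = \tfrac12\partial_t + \partial_d$ with $\partial_d^2 = 2\partial_\tau + 4d^2\partial_\tau^2$, while the paper differentiates the composite function directly in $(t,s)$ coordinates --- and your computations (e.g.\ $\partial_d^4 h\big|_{d=0} = 12h''_{22}$) reproduce the paper's coefficients exactly.
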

Note that for stationary processes, $h(t, \tau) = h(\tau)$ is constant with respect to $t$. In this case, all the derivatives above with respect to the first argument vanish and the result reduces exactly to Lemma 3.2 in \citet{Bernoulli}.

\subsection{Peak height distribution of non-stationary Gaussian processes with general covariance}

% \begin{lemma}
% Let \{$X(t), t \in \mathbb{R}$\} be a centered, smooth 1D Gaussian process and define $\rho(t): = \Cor(X(t),X''(t)|X'(t) = 0)$. Then the random vector $(X(t),X'(t),X''(t))$ is non-degenerate if $0 < |\rho(t)| < 1$.
% \end{lemma}

To evaluate the Kac-Rice formula \eqref{eq:point}, we need to derive the variances and covariances of the first and second order derivatives of $X(t)$, which are provided by Lemma \ref{lem:X_cov}. Detailed derivation can be found in the proof. The following theorem states a compactly elegant formula for the peak height distribution of centered, smooth Gaussian processes. Let $\phi(t)$ and $\Phi(t)$ denote the standard normal density and CDF, respectively.
\begin{thm}\label{thm:peak_height_1D}
 The peak height density of a smooth Gaussian process \{$X(t), t \in \mathbb{R}$\} with mean 0 and standard deviation $\sigma(t)$ is given by 
 \begin{align}\label{eq:peak_height_1D}
 f_t(x) & = \frac{1}{\tilde{\sigma}(t)}\phi\left(\frac{x}{\tilde{\sigma}(t)}\right) \sqrt{2\pi(1-\rho^2(t))}\psi\left(\frac{-\rho(t)x}{\sqrt{1-\rho^2(t)}\tilde{\sigma}(t)}\right),
%  & = \frac{1}{\sqrt{\sigma(t)^2 - \frac{\sigma(t)^2\sigma'(t)^2}{\sigma'(t)^2+ \sigma(t)^2\lambda_1}}}\phi\left(\frac{x}{\sqrt{\sigma(t)^2 - \frac{\sigma(t)^2\sigma'(t)^2}{\sigma'(t)^2+ \sigma(t)^2\lambda_1}}}\right) \\ 
% & \sqrt{\frac{2\pi\sigma(t)^2\left(\lambda_2 - \lambda_1^2 - \frac{r_1^2}{\lambda_1}\right)}{\sigma''(t)^2 + (4\sigma'(t)^2-2\sigma(t)\sigma''(t))\lambda_1 + \sigma(t)^2\lambda_2 + 4\sigma(t) \sigma'_t r_1 - \frac{(\sigma'(t)\sigma''(t)+ \sigma(t)\sigma'(t)\lambda_1 + \sigma(t)^2r_1)^2}{\sigma'(t)^2+ \sigma(t)^2\lambda_1}}} \\ 
% & \psi\left(\frac{\left(\lambda_{1} + \frac{(2\sigma'(t)^2 - \sigma(t)\sigma''(t))\lambda_1 + \sigma(t)\sigma'(t)r_1}{\lambda_1\sigma(t)^2}\right) x}{\sqrt{\sigma(t)^2\left(\lambda_2 - \lambda_1^2 - \frac{r_1^2}{\lambda_1}\right)}}\right)
\numberthis   
 \end{align}
 where $\tilde{\sigma}^2(t): = \Var(X(t)|X'(t) = 0)$, $\rho(t): = \Cor(X(t),X''(t)|X'(t) = 0)$ satisfying $|\rho(t)|<1$, and the function $\psi(\cdot)$ is defined as
\begin{equation}
\psi(x) = \int_{-\infty}^x \Phi(y)dy = \phi(x) + x\Phi(x), \quad x\in \R.
\label{def: psi}
\end{equation}
The mean and variance of the peak height are $-\sqrt{\pi/2}\rho(t)\tilde{\sigma}(t)$ and $[ 1-(\pi/2-1)\rho^2(t) ]\tilde{\sigma}^2(t)$ respectively.
\end{thm}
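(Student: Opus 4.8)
The plan is to evaluate the Kac--Rice ratio \eqref{eq:point} in the one-dimensional case, where $\nabla X(t)=X'(t)$, $\nabla^2 X(t)=X''(t)$, the event $\{\nabla^2 X(t)\prec 0\}$ becomes $\{X''(t)<0\}$, and $|\det\nabla^2 X(t)|\,\mathbbm{1}_{\{X''(t)<0\}}=-X''(t)\,\mathbbm{1}_{\{X''(t)<0\}}$. Since \eqref{eq:point} is a conditional survival function of $X(t)$, differentiating in $x$ and disintegrating over the value of $X(t)$ gives the peak height density in the form
\begin{equation*}
f_t(x)=\frac{g(x)\,p_{X(t)\mid X'(t)=0}(x)}{\int_{-\infty}^{\infty} g(y)\,p_{X(t)\mid X'(t)=0}(y)\,dy},\qquad g(x):=\E\!\left[-X''(t)\,\mathbbm{1}_{\{X''(t)<0\}}\mid X(t)=x,\,X'(t)=0\right],
\end{equation*}
so the whole problem reduces to computing the inner conditional expectation $g(x)$ and the normalizing integral.

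First I would assemble the $3\times 3$ covariance matrix of $(X(t),X'(t),X''(t))$ from Lemma \ref{lem:X_cov} and condition on $X'(t)=0$. Because the process is centered and jointly Gaussian, this is a routine Schur-complement computation; it yields $\Var(X(t)\mid X'(t)=0)=\tilde\sigma^2(t)$, a conditional variance for $X''(t)$, and the conditional correlation $\rho(t)=\Cor(X(t),X''(t)\mid X'(t)=0)$, with $|\rho(t)|<1$ guaranteeing a positive residual variance. Conditioning further on $X(t)=x$, the variable $X''(t)$ is Gaussian with mean proportional to $\rho(t)x/\tilde\sigma(t)$ and variance proportional to $1-\rho^2(t)$. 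The key structural point is that after this second conditioning every detailed covariance entry from Lemma \ref{lem:X_cov} collapses into the two parameters $\tilde\sigma(t)$ and $\rho(t)$.

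Next I would evaluate $g(x)=\E[-Z\,\mathbbm{1}_{\{Z<0\}}]$ for the resulting $Z\sim N(\mu,v)$ via the truncated-Gaussian identity $\E[-Z\,\mathbbm{1}_{\{Z<0\}}]=\sqrt{v}\,\phi(\mu/\sqrt v)-\mu\,\Phi(-\mu/\sqrt v)$, and recognize the right-hand side as $\sqrt{v}\,\psi(-\mu/\sqrt v)$ through the definition \eqref{def: psi}. With $\mu/\sqrt v=\rho(t)x/(\sqrt{1-\rho^2(t)}\,\tilde\sigma(t))$ this reproduces exactly the $\psi$-argument in \eqref{eq:peak_height_1D}, while $p_{X(t)\mid X'(t)=0}(x)=\tilde\sigma(t)^{-1}\phi(x/\tilde\sigma(t))$ supplies the Gaussian prefactor. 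It then remains to show the normalizing integral equals $1/\sqrt{2\pi(1-\rho^2(t))}$; after rescaling $u=x/\tilde\sigma(t)$ this is $\int\phi(u)\psi(-bu)\,du$ with $b=\rho(t)/\sqrt{1-\rho^2(t)}$, which I would evaluate by writing $\psi(-bu)=\phi(bu)-bu\,\Phi(-bu)$ and integrating the second piece by parts; the two Gaussian products combine to $(1+b^2)/\sqrt{2\pi(1+b^2)}=1/\sqrt{2\pi(1-\rho^2(t))}$, cancelling the surviving constants and giving \eqref{eq:peak_height_1D}.

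Finally, the mean and variance follow by integrating $x f_t(x)$ and $x^2 f_t(x)$. Under $u=x/\tilde\sigma(t)$ these reduce to $\int u\phi(u)\psi(-bu)\,du$ and $\int u^2\phi(u)\psi(-bu)\,du$, both handled by the same integration-by-parts technique together with the elementary identity $\E[\Phi(Z/b)]=\tfrac12$ for $Z\sim N(0,1)$; this yields $\E[X]=-\sqrt{\pi/2}\,\rho(t)\tilde\sigma(t)$ and $\E[X^2]=(1+\rho^2(t))\tilde\sigma^2(t)$, whence the stated variance $[\,1-(\pi/2-1)\rho^2(t)\,]\tilde\sigma^2(t)$. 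I expect the main obstacle to be purely computational bookkeeping: carrying out the two-stage Gaussian conditioning so that everything genuinely collapses onto $(\tilde\sigma(t),\rho(t))$, and executing the normalizing and moment integrals---where the non-obvious integration by parts producing the factor $1+b^2$ is the crux that makes the prefactors cancel cleanly.
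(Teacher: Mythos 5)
Your proposal is correct and follows essentially the same route as the paper's proof: both evaluate the Kac--Rice ratio by two-stage Gaussian conditioning on $X'(t)=0$ and then $X(t)=x$, reduce the conditional expectation of $-X''(t)\mathbbm{1}_{\{X''(t)<0\}}$ to a truncated-Gaussian integral recognized as $\sqrt{v}\,\psi(-\mu/\sqrt{v})$, and derive the mean and variance from Gaussian integrals. The only immaterial difference is that you obtain the normalizing constant by integrating the numerator over $x$ (via the identity $\int \phi(u)\psi(-bu)\,du=\sqrt{1+b^2}/\sqrt{2\pi}$), whereas the paper evaluates the denominator $\E[|X''(t)|\mathbbm{1}_{\{X''(t)<0\}}\mid X'(t)=0]$ directly as a one-dimensional truncated-Gaussian expectation, and the paper carries generic covariance entries through the computation before collapsing them to $(\rho(t),\tilde{\sigma}(t))$ at the end, while you parametrize by these two quantities from the outset.
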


Surprisingly, the density function \eqref{eq:peak_height_1D} only depends on two parameters, $\rho(t)$ and $\tilde{\sigma}(t)$, and has the form of a Gaussian density multiplied by a tilting factor. By Lemma \ref{lem:X_cov}, $\rho(t)$ can be computed as
\begin{align*}
    \rho(t) = &  \Cor(X(t),X''(t)|X'(t) = 0) \\
     = & \frac{\E[X(t)X''(t)]\Var(X'(t)) - \E[X(t)X'(t)]\E[X'(t)X''(t)]}{\sqrt{\Var(X(t))\Var(X'(t)) - \E[X(t)X'(t)]^2}} \\
     & \times \frac{1}{\sqrt{\Var(X'(t))\Var(X''(t)) - \E[X'(t)X''(t)]^2} }\numberthis \label{eq:rho_vcov}\\
     = & -\frac{2\left(2h'_{2}(t)-\frac{1}{4}h''_{11}(t)\right)\left(2h'_{2}(t)+\frac{1}{4}h''_{11}(t)\right)-h'_{1}(t)\left(h''_{12}(t)-\frac{1}{8}h'''_{111}\right)}{\sqrt{-{\left(h''_{12}(t)-\frac{1}{8}h'''_{111}(t)\right)}^2-\left(2h'_{2}(t)-\frac{1}{4}h''_{11}(t)\right)\left(12h''_{22}(t)-h'''_{112}(t)+\frac{1}{16}h''''_{1111}(t)\right)}} \\ 
     & \times \frac{1}{\sqrt{-{h'_{1}(t)}^2-4h_0(t)\left(2h'_{2}(t)-\frac{1}{4}h''_{11}(t)\right)}}. \numberthis \label{eq:rho} 
\end{align*}
Both \eqref{eq:rho_vcov} and \eqref{eq:rho} can be used to compute $\rho(t)$, depending on which is more convenient, the moments or the partial derivatives of the covariance function. In the analytical examples below, it is easier to use \eqref{eq:rho_vcov}. In the numerical methods (Section \ref{sec:num_alg}), it is easier to use \eqref{eq:rho} because all the terms in \eqref{eq:rho} can be programmed as the partial derivatives of the covariance function. The same reasoning can be applied to the computation of $\tilde{\sigma}(t)$: 
\begin{align}\label{eq:sigma_t_vcov}
    \tilde{\sigma}^2(t) = & \Var(X(t)|X'(t) = 0)
     = \Var(X(t)) - \frac{\E[X(t)X'(t)]^2}{\Var(X'(t))}  
     = h_0(t) - \frac{h'^2_1(t)}{h''_{11}(t) - 8h'_2(t)}.
\end{align}

Compared to the parameter $\kappa$ of the peak height density derived in \cite{Bernoulli}, using $\rho(t)$ and $\tilde{\sigma}(t)$ to characterize the peak height density improves the understanding and interpretation of the density function, as they both have a clear statistical meaning. Figure \ref{fig:1D_PeakHeightDensity} displays how these two parameters affect the peak height density. As we can see from \eqref{eq:peak_height_1D} and Figure \ref{fig:PeakHeightDensityRho}, if we fix $\tilde{\sigma}(t)$, the density functions with $\rho(t)$ and $-\rho(t)$ are reflection symmetric with respect to $x = 0$. As $\rho(t)$ increases, the absolute value of the mean increases proportionally, and the variance decreases. The parameter $\tilde{\sigma}(t)$ is the scale parameter of the peak height distribution. The distribution is more spread out for large $\tilde{\sigma}(t)$ as displayed in Figure \ref{fig:PeakHeightDensitySigma}. Increasing $\tilde{\sigma}(t)$ leads to a linear increase in both the absolute value of the mean and standard deviation. Special cases of these parameters, including when $X(t)$ is stationary, are described in Section \ref{sec:special_case} below.

\begin{figure}
\centering
\begin{subfigure}[a]{200pt}
    \centering
    \includegraphics[width = 200pt]{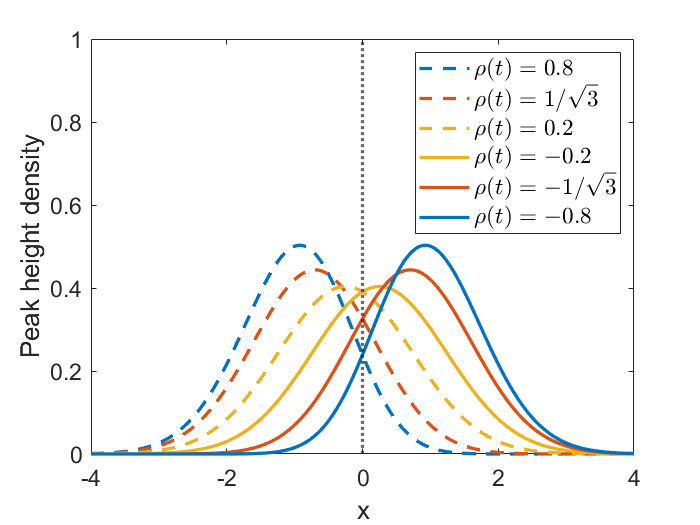}
    \caption{$\tilde{\sigma}(t) = 1$}
    \label{fig:PeakHeightDensityRho}
\end{subfigure}
\begin{subfigure}[a]{200pt}
    \centering
    \includegraphics[width = 200pt]{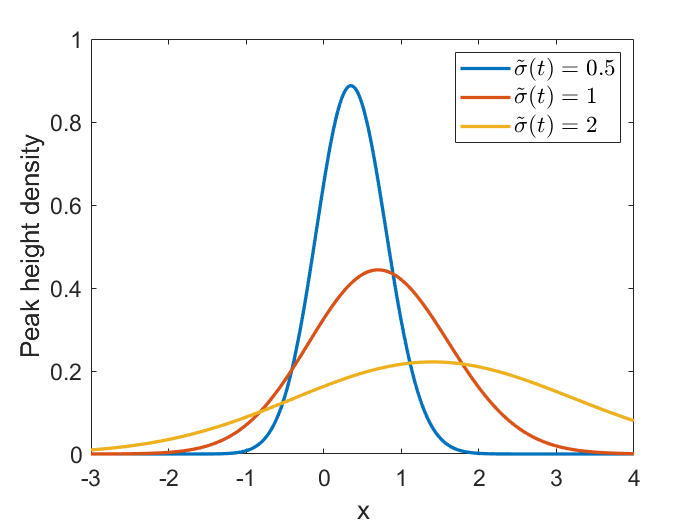}
    \caption{$\rho(t) = -1/\sqrt{3}$}
    \label{fig:PeakHeightDensitySigma}
\end{subfigure}
\caption{The peak height density \eqref{eq:peak_height_1D} with different parameters. Left panel: Fixing $\tilde{\sigma}(t)$ at 1, the effect of $\rho(t)$ on the peak height density. Right panel: Fixing $\rho(t)$ at $-1/\sqrt{3}$, the effect of $\tilde{\sigma}(t)$ on the peak height density. }
\label{fig:1D_PeakHeightDensity}
\end{figure}
% \[
% \rho = -\frac{2\,\left(2\,H_{2}-\frac{H_{11}}{4}\right)\,\left(2\,H_{2}+\frac{H_{11}}{4}\right)-H_{1}\,\left(H_{12}-\frac{H_{111}}{8}\right)}{\sqrt{-{H_{1}}^2-4\,H\,\left(2\,H_{2}-\frac{H_{11}}{4}\right)}\,\sqrt{-{\left(H_{12}-\frac{H_{111}}{8}\right)}^2-\left(2\,H_{2}-\frac{H_{11}}{4}\right)\,\left(12\,H_{22}+\frac{H_{112}}{2}-H_{121}-\frac{H_{211}}{2}+\frac{H_{1111}}{16}\right)}}
% \]

Theorem \ref{thm:peak_height_1D} can be used to compute the peak height distribution of any centered, smooth, non-stationary Gaussian process, including the Gaussian process with non-constant variance. In general, Gaussian processes with non-constant variance are useful in modeling the heteroscedastic noise (i.e. the variance of noise depends on location), which is often present in real-world problems (see e.g. \citealp{le2005}; \citealp{lazaro2011}). Understanding the peak height distribution of such Gaussian processes is crucial for peak detection given heteroscedastic noise.

\begin{example}\label{ex:non_const_var}
Consider the following Gaussian process with non-constant variance
\begin{equation*}
    X(t) = \sigma(t)\int_{-\infty}^\infty \frac{1}{\sqrt{\nu}} k\left(\frac{t-s}{\nu}\right)\,dB(s),
\end{equation*}
where $k(s)$, $s \in \mathbbm{R}$, is a smooth kernel with $\int k^2(s) ds = 1$, $dB(s)$ is Gaussian white noise and $\nu$ is the bandwidth. Figure \ref{fig:1D_variance_example_X} displays simulated instances of this process on a grid ranging from 0 to 1 with a step size of 0.005. The process is generated by multiplying a stationary, unit-variance Gaussian process (convolution of Gaussian white noise with a Gaussian kernel of bandwidth $\nu = 0.3$) by a linear standard deviation function $\sigma(t) = t + 0.1$. Consequently, the spatial correlation is independent of the location $t$, as shown in Figure \ref{fig:1D_variance_example_cor}. Figure \ref{fig:1D_variance_example_sigma_tilde} and Figure \ref{fig:1D_variance_example_rho} display the two parameters $\rho(t)$ and $\tilde{\sigma}(t)$ that affect the peak height distribution. Peak height distribution of this process at a fixed location $t$ can be computed by plugging $\rho(t)$ and $\tilde{\sigma}(t)$ into \eqref{eq:peak_height_1D}.
\end{example}

\begin{figure}[ht]
\centering
\begin{subfigure}[a]{200pt}
    \centering
    \includegraphics[width = 200pt]{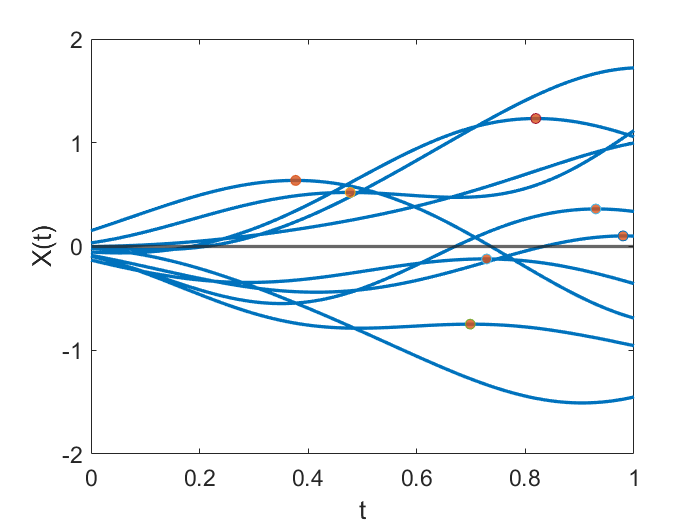}
    \caption{Simulated instances with orange dots indicating the local maxima}
    \label{fig:1D_variance_example_X}
\end{subfigure}
\begin{subfigure}[a]{200pt}
    \centering
    \includegraphics[width = 200pt]{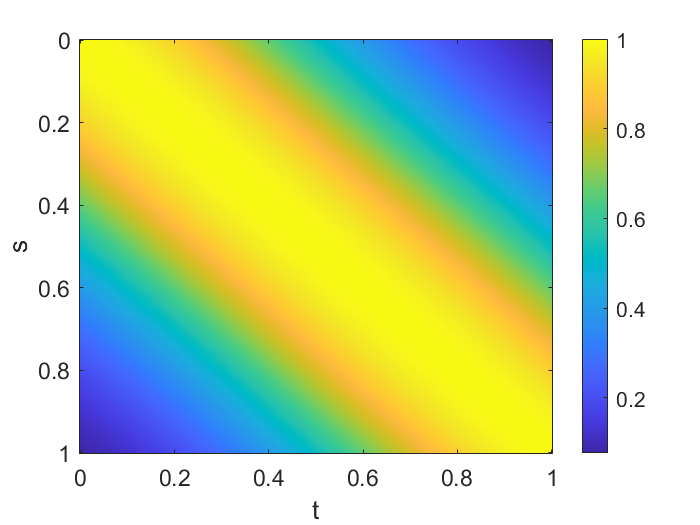}
    \caption{Correlation}
    \label{fig:1D_variance_example_cor}
\end{subfigure}
\begin{subfigure}[a]{200pt}
    \centering
    \includegraphics[width = 200pt]{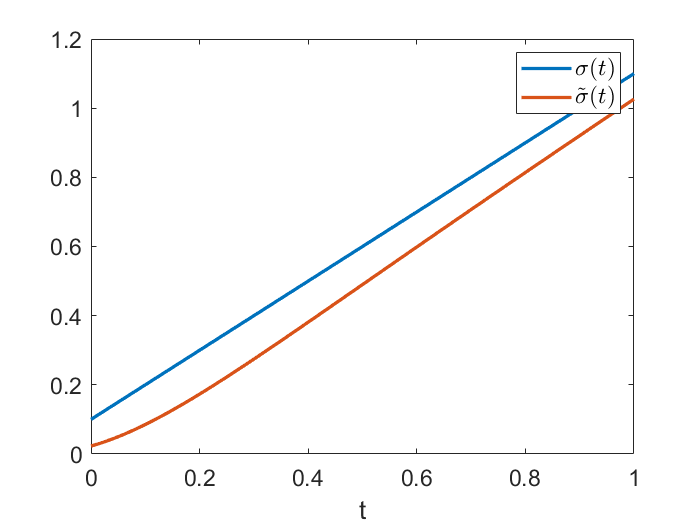}
    \caption{$\sigma(t)$ and $\tilde{\sigma}(t)$}
    \label{fig:1D_variance_example_sigma_tilde}
\end{subfigure}
\begin{subfigure}[a]{200pt}
    \centering
    \includegraphics[width = 200pt]{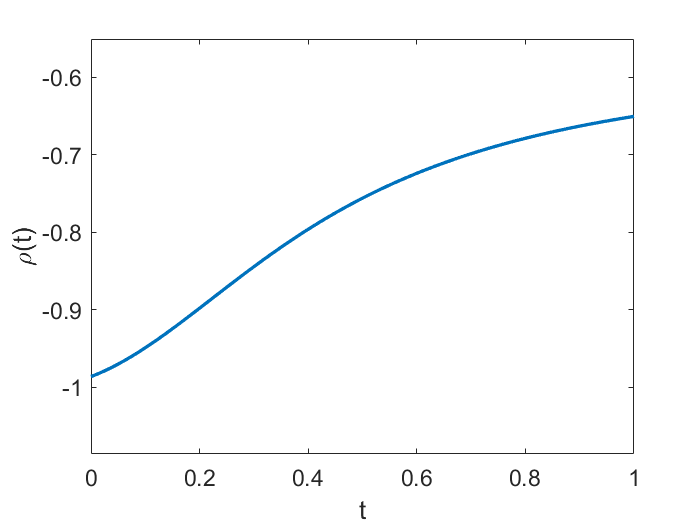}
    \caption{$\rho(t)$}
    \label{fig:1D_variance_example_rho}
\end{subfigure}
\caption{Simulated instances of the Gaussian process with non-constant standard deviation function $\sigma(t) = t + 0.1$, its spatial correlation, and the parameters $\sigma(t)$, $\tilde{\sigma}(t)$, and $\rho(t)$.}
\label{fig:1D_variance_example}
\end{figure}

\subsection{Special cases}\label{sec:special_case}
In particular, if $X(t)$ has constant variance, $\rho(t)$ and $\tilde{\sigma}(t)$ can be simplified as
\begin{align*}
    \rho(t) &= -\frac{\Var{(X'(t))}}{\sqrt{\Var{(X''(t))} - \frac{\E[X'(t)X''(t)]^2}{\Var{(X'(t))}}}} = \frac{2h'_2(t)}{\sqrt{-h'''_{112}(t) + 12h''_{22}(t)+\frac{h''^2_{12}(t)}{2h'_{2}(t)}}}, \numberthis \label{eq:rho_constant_sigma_vcov}\\
    \tilde{\sigma}^2(t) & = \Var(X(t))= h_0.
\end{align*}
This special case has been considered in \citet{Cheng2023}, where the Gaussian process is assumed to have unit variance.  Note that \eqref{eq:rho_constant_sigma_vcov} indicates $\rho(t)\leq 0$ when $X(t)$ has constant variance. 

If we further assume stationarity and unit-variance, then $\rho(t)$ and $\tilde{\sigma}(t)$ become constant:
\begin{align*}
    \rho &= -\frac{\Var{(X'(t))}}{\sqrt{\Var{(X''(t))} - \frac{\E[X'(t)X''(t)]^2}{\Var{(X'(t))}}}}= \frac{h'_2}{\sqrt{3h''_{22}}}, \\
    \tilde{\sigma}^2 & = \Var(X(t))= 1. 
\end{align*}
As a special case of \eqref{eq:rho_constant_sigma_vcov}, the parameter $\rho$ is non-positive for stationary random processes. 

The general peak height density \eqref{eq:peak_height_1D} has a similar form as the peak height density of a unit-variance, stationary Gaussian process derived in \citet{Bernoulli}:
\begin{align*}
        f_t(x) 
        &= \frac{\sqrt{3 -\kappa^2}}{\sqrt{3}}\phi\left(\frac{\sqrt{3 }x}{\sqrt{3 -\kappa^2}}\right) + \frac{\sqrt{2\pi}\kappa x}{\sqrt{3}} \Phi\left(\frac{\kappa x}{\sqrt{3 -\kappa^2}}\right)\phi(x) \\
        & = \phi(x) \sqrt{\frac{2\pi(3-\kappa^2)}{3}}\psi\left(\frac{\kappa x}{\sqrt{3-\kappa^2}}\right),
\numberthis
\label{eqn:density_1d}
\end{align*} 
where $\kappa$ is defined as
\begin{equation*}
\kappa:= -h'_2/\sqrt{h''_{22}} = -\sqrt{3}\rho.
\end{equation*}
Note that $\kappa = 1$ if the covariance function is Gaussian (\citealp{Bernoulli}), and the corresponding $\rho = -1/\sqrt{3}$.

\subsection{The boundary case: $|\rho(t)| = 1$}
Theorem \ref{thm:peak_height_1D} above required that $|\rho(t)| < 1$. The random vector $(X(t),X'(t),X''(t))$ is degenerate when $|\rho(t)| = 1$ violating the assumption (C2) in \citet{Extreme}. Therefore, the technique used in the proof of Theorem \ref{thm:peak_height_1D} needs to be modified for the boundary case. 
% \begin{prop}
% When $\rho(t) = 0$, the peak height of $X(t)$ follows $\mathcal{N}(0,\tilde{\sigma}(t))$ with density:
% \begin{equation}\label{eq:peak_density_rho0}
%     f_t(x) = \frac{1}{\tilde{\sigma}(t)}\phi\left(\frac{x}{\tilde{\sigma}(t)}\right).
% \end{equation}
% \end{prop}

% \begin{proof}
%     $\rho(t) = 0$ implies $X(t)$ is independent of $X''(t)$ given a fixed point $t$ is a critical point. As a result, the peak height distribution of $X(t)$ (the distribution of $X(t)$ given $X'(t) = 0$ and $X''(t) < 0$) is the same as the distribution of $X(t)$ given $X'(t) = 0$ which is Gaussian with mean $0$ and standard deviation $\tilde{\sigma}(t)$.
% \end{proof}

\begin{prop}\label{prop:boundary}
% When $\rho(t) = -1$, the peak height distribution of $X(t)$ is ${\rm Rayleigh}(\tilde{\sigma}(t))$ with density
% \begin{equation}\label{eq:peak_density_degen_n}
%     f_t(x) = \frac{\sqrt{2\pi}x}{\tilde{\sigma}^2(t)}\phi\left(\frac{x}{\tilde{\sigma}(t)}\right), \quad x \geq 0.
% \end{equation}
% The mean and variance are $\sqrt{\pi/2}\tilde{\sigma}(t)$ and $( 2-\pi/2)\tilde{\sigma}^2(t)$ respectively.

% When $\rho(t) = 1$, the peak height distribution of $X(t)$ is $-{\rm Rayleigh}(\tilde{\sigma}(t))$ with density
% \begin{equation}\label{eq:peak_density_degen_p}
%     f_t(x) = -\frac{\sqrt{2\pi}x}{\tilde{\sigma}^2(t)}\phi\left(\frac{x}{\tilde{\sigma}(t)}\right), \quad x \leq 0.
% \end{equation}
% The mean and variance are $-\sqrt{\pi/2}\tilde{\sigma}(t)$ and $( 2-\pi/2)\tilde{\sigma}^2(t)$ respectively.

When $\rho(t) = \pm 1$, the peak height distribution of $X(t)$ is $\mp {\rm Rayleigh}(\tilde{\sigma}(t))$ with density
\begin{equation}\label{eq:peak_density_degen_n}
    f_t(x) = \mp \frac{\sqrt{2\pi}x}{\tilde{\sigma}^2(t)}\phi\left(\frac{x}{\tilde{\sigma}(t)}\right).
\end{equation}
The mean and variance are $\mp \sqrt{\pi/2}\tilde{\sigma}(t)$ and $( 2-\pi/2)\tilde{\sigma}^2(t)$ respectively.
\end{prop}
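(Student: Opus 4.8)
The plan is to evaluate the Kac--Rice ratio \eqref{eq:point} directly in the degenerate regime, rather than through the change of variables used to prove Theorem \ref{thm:peak_height_1D}. The crucial observation is that \eqref{eq:point} itself does not require assumption (C2): conditioning on $X'(t)=0$ is legitimate because $\Var(X'(t))>0$, and it is only the full triple $(X(t),X'(t),X''(t))$ that degenerates when $|\rho(t)|=1$. After conditioning on $X'(t)=0$, the pair $(X(t),X''(t))$ is bivariate Gaussian with $\Var(X(t)\mid X'=0)=\tilde\sigma^2(t)$, $\Var(X''(t)\mid X'=0)=:\tilde\sigma_2^2(t)$, and correlation $\rho(t)$. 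At $\rho(t)=\pm1$ this law collapses onto the line $X''(t)=c\,X(t)$ almost surely, where $c=\operatorname{sign}(\rho(t))\,\tilde\sigma_2(t)/\tilde\sigma(t)$; this is exactly the statement that the standardized variables coincide up to the sign of $\rho(t)$.

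I would then substitute $X''=cX$ into both the numerator and denominator of \eqref{eq:point}. In one dimension $|\det\nabla^2X|=|X''|$ and $\{\nabla^2X\prec0\}=\{X''<0\}$, so the substitution replaces $\mathbbm{1}_{\{X''<0\}}$ by $\mathbbm{1}_{\{\operatorname{sign}(\rho(t))\,x<0\}}$ and $|X''|$ by $|c|\,|x|$. The factor $|c|$ is common to both integrals and cancels, leaving a one-dimensional Gaussian computation in $X(t)\mid X'(t)=0\sim N(0,\tilde\sigma^2(t))$. Reading off the integrand shows that the peak-height density is proportional to $|x|\,\mathbbm{1}_{\{\operatorname{sign}(\rho(t))\,x<0\}}\,\tilde\sigma^{-1}(t)\phi(x/\tilde\sigma(t))$; normalizing with $\int_0^\infty v\phi(v)\,dv=1/\sqrt{2\pi}$ gives the normalizing constant $\tilde\sigma(t)/\sqrt{2\pi}$ and hence \eqref{eq:peak_density_degen_n}. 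The sign bookkeeping is what produces the $\mp$: for $\rho(t)=+1$ we have $c>0$, so the density is supported on $\{x<0\}$, a $-\,\mathrm{Rayleigh}(\tilde\sigma(t))$, while for $\rho(t)=-1$ it is supported on $\{x>0\}$, a $+\,\mathrm{Rayleigh}(\tilde\sigma(t))$.

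For the moments, the quickest route is continuity: letting $\rho(t)\to\pm1$ in the mean $-\sqrt{\pi/2}\,\rho(t)\tilde\sigma(t)$ and variance $[1-(\pi/2-1)\rho^2(t)]\tilde\sigma^2(t)$ from Theorem \ref{thm:peak_height_1D} yields $\mp\sqrt{\pi/2}\,\tilde\sigma(t)$ and $(2-\pi/2)\tilde\sigma^2(t)$; the same values follow by integrating the signed Rayleigh density directly. As an independent consistency check, I would verify that the full density \eqref{eq:peak_height_1D} converges pointwise to \eqref{eq:peak_density_degen_n} as $\rho(t)\to\pm1$: writing $a=\sqrt{1-\rho^2(t)}\to0$, the tilting factor $\sqrt{2\pi}\,a\,\psi(-\rho(t)x/(a\tilde\sigma(t)))$ behaves like $-\sqrt{2\pi}\,x/\tilde\sigma(t)$ on the surviving half-line and vanishes on the other, using $\psi(y)\sim y$ as $y\to+\infty$ and $\psi(y)\to0$ as $y\to-\infty$ from \eqref{def: psi}.

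The main obstacle is making the degenerate conditioning rigorous, i.e.\ justifying that replacing $X''$ by $cX$ inside the conditional expectations of \eqref{eq:point} is valid even though (C2) fails for $(X(t),X'(t),X''(t))$. The safest way to close this gap is to promote the limiting argument of the previous paragraph to the actual proof: perturb the covariance so that $\rho(t)$ is pushed slightly off $\pm1$, apply Theorem \ref{thm:peak_height_1D}, and pass to the limit, invoking Scheff\'e's lemma (the pointwise limit is a genuine density, so $L^1$ convergence and hence convergence in distribution follow). This simultaneously supplies the continuity in $\rho(t)$ that the direct substitution quietly assumes. The only other place to be careful is the sign of $c$, which determines the supporting half-line and therefore the direction of the $\mp$.
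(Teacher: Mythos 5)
Your proposal is correct and takes essentially the same approach as the paper: both exploit the fact that, conditional on $X'(t)=0$, perfect correlation collapses $(X(t),X''(t))$ onto a line almost surely, reducing the Kac--Rice ratio to a one-dimensional Gaussian integral whose normalization yields the signed Rayleigh density. The only cosmetic differences are that the paper parametrizes the degenerate line by $X''(t)$ (computing the survival function $F_t(u)$ for $\rho(t)=-1$ and then differentiating), whereas you parametrize by $X(t)$ and read off the density directly, supplementing it with a Scheff\'e-type limiting argument for extra rigor.
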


% \begin{example}
% If we define $X(t) = -t^2 + cz$ where $z$ is a standard Gaussian random variable, and $c$ is a positive constant, then $\rho(0) = 0$ and $\tilde{\sigma}(0) = c$. The peak height density at $t = 0$ is Gaussian
% \begin{equation}
%     f_0(x) = \frac{1}{c}\phi\left(\frac{x}{c}\right). 
% \end{equation}
% \end{example}

\begin{example}\label{ex:cos}
Define the Cosine process
\begin{equation}\label{def:cos}
   X(t) = c_1z_1\cos(\omega t) + c_2z_2\sin(\omega t),
\end{equation}
where $z_1$ and $z_2$ are independent, standard Gaussian random variables and $c_1$, $c_2$, and $\omega$ are positive constants. $X(t)$ has non-constant variance $\sigma^2(t) = c_1^2\sin^2(\omega t) + c_2^2 \cos^2(\omega t)$ and is therefore non-stationary. It can be derived by \eqref{eq:rho_vcov} and \eqref{eq:sigma_t_vcov} that $\rho(t) = -1$ and $\tilde{\sigma}(t) = c_1c_2/\sigma(t)$. The peak height density is Rayleigh
\begin{equation*}
    f_t(x) = \frac{\sqrt{2\pi}[c_1^2\sin^2(\omega t) + c_2^2 \cos^2(\omega t)]x}{c_1^2c_2^2}\phi\left(\frac{\sqrt{c_1^2\sin^2(\omega t) + c_2^2 \cos^2(\omega t)}x}{c_1c_2}\right), \quad x \geq 0.
\end{equation*}
Figure \ref{fig:1D_cos_example_X} shows simulated instances of the cosine process ($\omega = 2$, $c_1 = 3$ and $c_2 = 4$) on a grid ranging from 0 to $2\pi$ with a step size of $0.002\pi$. In Figure \ref{fig:1D_cos_example_X_peak}, we compare the peak height of this process near $t = \pi/4$ (large $\tilde{\sigma}(t)$) and $t = \pi/2$ (small $\tilde{\sigma}(t)$). The parameter $\tilde{\sigma}(t)$ is the only parameter of the peak height distribution, and has a positive effect on both the mean and standard deviation, as displayed in Figure \ref{fig:1D_cos_example_peak}.
\end{example}

\begin{figure}[!ht]
\centering
\begin{subfigure}[a]{200pt}
    \centering
    \includegraphics[width = 200pt]{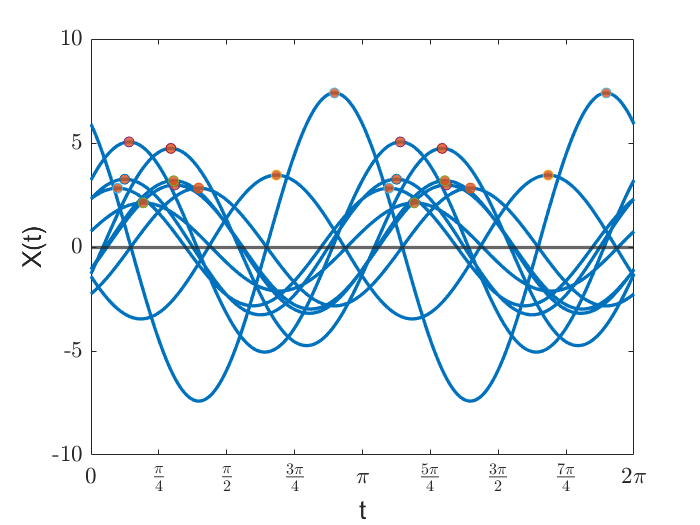}
    \caption{Simulated instances with orange dots indicating the local maxima}
    \label{fig:1D_cos_example_X}
\end{subfigure}
\begin{subfigure}[a]{200pt}
    \centering
    \includegraphics[width = 200pt]{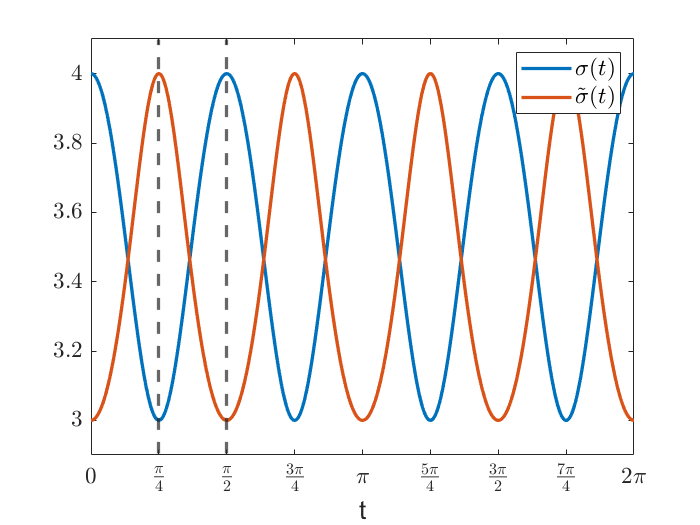}
    \caption{$\sigma(t)$ and $\tilde{\sigma}(t)$}
    \label{fig:1D_cos_sigma_t}
\end{subfigure}
\begin{subfigure}[a]{200pt}
    \centering
    \includegraphics[width = 200pt]{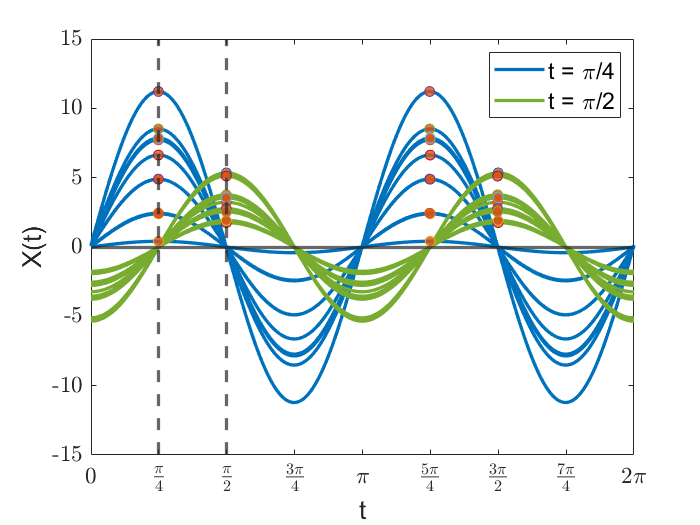}
    \caption{Selected simulated instances having a peak near $t = \pi/4$ versus those having a peak near $t = \pi/2$ } \label{fig:1D_cos_example_X_peak}
\end{subfigure}
\begin{subfigure}[a]{200pt}
    \centering
    \includegraphics[width = 200pt]{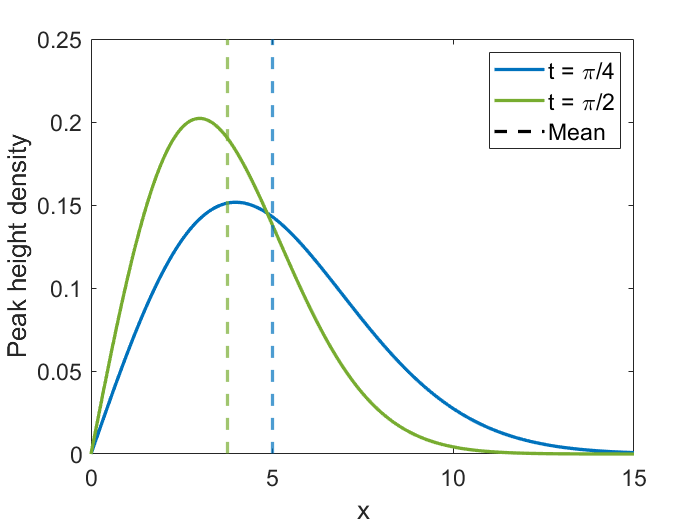}
    \caption{Peak height distribution at $t = \pi/4$ versus $t = \pi/2$}
    \label{fig:1D_cos_example_peak}
\end{subfigure}
\caption{Simulated instances of the Cosine process in Example \ref{ex:cos}, $\sigma(t)$, $\tilde{\sigma}(t)$ and the peak height distribution at $t = \pi/4$ (large $\tilde{\sigma}(t)$) versus $t = \pi/2$ (small $\tilde{\sigma}(t)$).}
\label{fig:1D_cos_example}
\end{figure}

% \subsubsection{Large $\sigma_t$}

% When $\sigma_t$ is large, these parameters
% \begin{align*}
% a - \frac{d^2}{b} & \approx \sigma_t^2 \\
% c - \xi & = \sigma_t^2\left(\lambda_2 - \lambda_1^2 - \frac{r_1^2}{\lambda_1}\right) \\ 
% c - \frac{e^2}{b} & \approx \sigma_t^2(\lambda_2 - \frac{r_1^2}{\lambda_1}) \\
% \eta & \approx \lambda_1 
% \end{align*}

% The spatial distribution of peaks is approximately uniform since 
% \begin{align*}
% \nu(t) & \propto \frac{\sqrt{c - \frac{e^2}{b}}}{\sqrt{b}} \\
%     & \approx \frac{\lambda_2}{\lambda_1} - \frac{r_1^2}{\lambda_1^2}.
% \end{align*}

% The peak height density can be approximated by  
% \begin{align*}
% h_t(x) & \approx \frac{1}{\sigma_t}\phi\left(\frac{x}{\sigma_t}\right) \sqrt{\frac{2\pi(\lambda_2 - \lambda_1^2 - \frac{r_1^2}{\lambda_1})}{\lambda_2 - \frac{r_1^2}{\lambda_1}}} \psi\left(\frac{\lambda_1 x}{\sqrt{\sigma_t^2\left(\lambda_2 - \lambda_1^2 - \frac{r_1^2}{\lambda_1}\right) }}\right) \\ 
% & = \frac{1}{\sigma_t}\phi\left(\frac{x}{\sigma_t}\right)\frac{\sqrt{2\pi}\sqrt{3 -\kappa_Z^2(t)}}{\sqrt{3}}\psi\left(\frac{\kappa_Z(t)x }{\sigma_t\sqrt{3 -\kappa_Z^2(t)}}\right).
% \end{align*}

\subsection{The Gaussian process with non-constant bandwidth}\label{sec:non_const_band}

The parameter $\rho(t)$ is the only parameter that controls how the peak height distribution varies over the domain when the process $X(t)$ has constant variance. While constant $\rho(t)$ is necessary for stationarity, non-stationarity does not necessarily imply non-constant $\rho(t)$. Next, we provide an interesting example of a non-stationary Gaussian process that has constant $\rho(t)$. 

We have discussed the Gaussian process with non-constant variance. Another type of non-stationary Gaussian process is the Gaussian process with spatially varying corelation. Gaussian processes with spatially varying correlation are commonly used (see e.g. \citealp{paciorek2003}; \citealp{remes2017}) to model data whose smoothness varies with $t$. 

By the spectral representation theorem (\citealp{Adler2007}), every mean square continuous, centered, stationary Gaussian process can be represented as a convolution of a white noise process with a proper kernel function that specifies the correlation structure:
\begin{equation}\label{def:Z_t}
    Z(t) = \int_{-\infty}^\infty \frac{1}{\sqrt{\nu}} k\left(\frac{t-s}{\nu}\right)\,dB(s) ,
\end{equation}
where $k(s)$, $s \in \mathbbm{R}$, is a smooth kernel with $\int k^2(s) ds = 1$ so that $Z(t)$ is unit-variance, $dB(s)$ is Gaussian white noise and $\nu$ is the bandwidth. In \eqref{def:Z_t}, the kernel bandwidth $\nu$ controls the smoothness and spatial correlation. We define the Gaussian process with non-constant bandwidth as a generalization of \eqref{def:Z_t} 
\begin{align*}
    X(t) &= \int_{-\infty}^\infty \frac{1}{\sqrt{\nu(t)}} k\left(\frac{t-s}{\nu(t)}\right)\,dB(s) ,
\numberthis
\label{def:non_constant_band}
\end{align*}
where the bandwidth function $\nu(t)$, $t\in \R$, is a positive non-constant smooth function. The Gaussian process with non-constant bandwidth is non-stationary since the spatial correlation varies with $t$. Also, since this process has unit variance, the peak height distribution depends solely on $\rho(t)$. In the following theorem, we show that even though the process is non-stationary, $\rho(t)$ is constant when the bandwidth $\nu(t)$ is linear in $t$. 

\begin{thm}\label{thm:linear_nu}
\begin{sloppypar}
Consider the non-stationary Gaussian process
\begin{equation*}
    X(t) = \int_{-\infty}^\infty \frac{1}{\sqrt{\nu(t)}} k\left(\frac{t-s}{\nu(t)}\right)\,dB(s), 
\end{equation*}
where $k(s)$, $s \in \mathbbm{R}$, is a smooth kernel with $\int k^2(s) ds = 1$. If the bandwidth function $\nu(t)$ is linear in $t$, then the parameter $\rho(t)= \Cor(X(t),X''(t)|X'(t) = 0)$ is independent of $t$.
\end{sloppypar}
\end{thm}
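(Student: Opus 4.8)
The plan is to evaluate $\rho(t)$ directly from its constant-variance expression \eqref{eq:rho_constant_sigma_vcov} — the process \eqref{def:non_constant_band} has unit variance, so $\tilde{\sigma}(t)\equiv 1$ and only $\rho(t)$ matters — by computing $\Var(X'(t))$, $\E[X'(t)X''(t)]$ and $\Var(X''(t))$ from the white-noise representation and showing that each is a fixed constant times a power of $\nu(t)$, with the powers arranged so as to cancel in \eqref{eq:rho_constant_sigma_vcov}. Write $g(t,s)=\nu(t)^{-1/2}k\big((t-s)/\nu(t)\big)$ and $w=(t-s)/\nu(t)$, so that $X(t)=\int g(t,s)\,dB(s)$ and, differentiating under the integral, $X'(t)=\int \partial_t g\,dB(s)$ and $X''(t)=\int \partial_t^2 g\,dB(s)$.

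First I would record the chain-rule identity $\partial_t w=(1-w\nu')/\nu$ and use it together with linearity ($\nu'\equiv a$, $\nu''\equiv 0$) to show
\[
\partial_t g = \nu^{-3/2}P(w), \qquad \partial_t^2 g = \nu^{-5/2}Q(w),
\]
where $P$ and $Q$ are fixed functions of $w$ alone (built from $k,k',k''$ and the constant $a$), carrying no separate dependence on $t$. This is exactly where linearity enters: because $\nu'$ is constant and $\nu''=0$, every occurrence of $t$ is funnelled through $w$ and the prefactor power of $\nu$, with no residual $\nu'(t)$ or $\nu''(t)$ terms surviving.

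Next I would write the three second moments as $L^2(ds)$ integrals of these expressions and apply the change of variables $s\mapsto w$, under which $ds=\nu\,dw$ and $s\in\R$ maps onto $w\in\R$. Since $P$ and $Q$ do not depend on $t$, the resulting $w$-integrals are genuine constants and the powers of $\nu$ bookkeep cleanly:
\[
\Var(X'(t))=\nu^{-2}C_1,\qquad \E[X'(t)X''(t)]=\nu^{-3}C_2,\qquad \Var(X''(t))=\nu^{-4}C_3,
\]
with $C_1=\int P^2\,dw$, $C_2=\int PQ\,dw$, $C_3=\int Q^2\,dw$ all independent of $t$ (the same computation applied to $g$ itself gives $\Var(X(t))=\int k^2\,dw=1$, confirming unit variance). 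Substituting into \eqref{eq:rho_constant_sigma_vcov},
\[
\rho(t)=-\frac{\nu^{-2}C_1}{\sqrt{\nu^{-4}C_3-\nu^{-6}C_2^2/(\nu^{-2}C_1)}}=-\frac{C_1}{\sqrt{C_3-C_2^2/C_1}},
\]
so the factor $\nu^{-2}$ cancels between numerator and denominator and $\rho(t)$ is independent of $t$.

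The main obstacle is not any single hard estimate but the bookkeeping that produces the clean powers $\nu^{-2},\nu^{-3},\nu^{-4}$; this is the step that genuinely requires $\nu$ to be linear, and I would track carefully where $\nu''=0$ is used, so that it is visible that the argument fails for nonlinear bandwidths (the residual $\nu'(t),\nu''(t)$ terms would make $P,Q$ themselves $t$-dependent). Secondary points to verify are the finiteness of the kernel moments defining $C_1,C_2,C_3$, so that $\rho(t)$ is well defined, and the justification of differentiating the white-noise integral, both routine for a smooth, rapidly decaying kernel $k$. Finally, I would note the conceptual reason behind the cancellation: the substitution $v=s/\sqrt{t_1t_2}$ shows that $\Cov(X(t_1),X(t_2))$ depends on $t_1,t_2$ only through $\log t_1-\log t_2$ (after the shift $t\mapsto t+b/a$ absorbing $b$), so $X(t)=Y\!\big(\log(t+b/a)\big)$ for a stationary unit-variance process $Y$; the diffeomorphism invariance of the peak height distribution \citep{CHENG2020108672} then yields the constancy of $\rho(t)$ at once, providing an independent check of the direct computation.
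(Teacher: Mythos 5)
Your proposal is correct and follows essentially the same route as the paper's proof: differentiate the white-noise representation under the integral, apply the Wiener isometry with the substitution $w=(t-s)/\nu(t)$, use that $\nu'$ is constant and $\nu''=0$ to obtain $\Var(X'(t))\propto\nu^{-2}(t)$, $\E[X'(t)X''(t)]\propto\nu^{-3}(t)$, $\Var(X''(t))\propto\nu^{-4}(t)$ with $t$-independent constants, and cancel the powers of $\nu$ in \eqref{eq:rho_constant_sigma_vcov}. Your closing observation---that after absorbing the shift the covariance depends only on $\log t_1-\log t_2$, so $X$ is a stationary process composed with a diffeomorphism and constancy of $\rho$ also follows from the invariance result of \citet{CHENG2020108672}---is a nice independent check that the paper does not include (and, unlike the paper's final displayed formula, you correctly retain the minus sign in $\rho$).
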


Theorem \ref{thm:linear_nu} also indicates the peak height distribution is independent of $t$ when the bandwidth function $\nu(t)$ is linear. This property is particularly interesting, considering the process $X(t)$ is non-stationary, and helps to reduce the computation time of evaluating the peak height distribution over a certain domain. 

\subsection{The Gaussian process with non-constant bandwidth: Gaussian kernel}\label{sec:non_constant_bandwidth_Gauss}

As an example, we demonstrate the explicit calculation of $\rho(t)$ when the kernel function $k(\cdot)$ in \eqref{def:non_constant_band} is Gaussian, i.e.  
\begin{align}
    X(t) &= \int_{-\infty}^\infty \frac{\sqrt{2}\pi^{1/4}}{\sqrt{\nu(t)}} \phi\left(\frac{t-s}{\nu(t)}\right)\,dB(s),
\label{def:non_constant_band_Gaussian}
\end{align}
where $\phi(x)$ is the standard Gaussian density. 

Since $X(t)$ is unit-variance, $\tilde{\sigma}(t) = 1$. Applying \eqref{eq:rho_constant_sigma_vcov}, we have
\begin{align}
\rho(t) & = -\frac{1}{2}\sqrt{\frac{(\nu'^2(t)+1)^{3}}{\splitfrac{\frac{43}{16}\nu'^6(t) + (\frac{103}{16}-\frac{1}{2}\nu''(t)\nu(t))\nu'^4(t) + \frac{9}{2}\nu'^2(t)+\frac{3}{4}}{+\nu''(t)\nu(t)+\frac{1}{2}\nu''^2(t)\nu^2(t)}}}. 
\label{eq:rho_gauss}
\end{align}

It is straightforward from \eqref{eq:rho_gauss} that when $\nu(t)$ is linear in $t$, say $\nu(t) = \nu't + \nu_0$, $\rho(t)$ is constant:
\begin{equation*}
\rho(t) = -\frac{1}{2}\sqrt{\frac{(\nu'^2+1)^{3}}{\frac{43}{16}\nu'^6 + \frac{103}{16}\nu'^4 + \frac{9}{2}\nu'^2+\frac{3}{4}}}.
\end{equation*}

\begin{example}\label{ex:non_const_band}
Consider a special case of \eqref{def:non_constant_band} when the kernel function $k(\cdot)$ is Gaussian and 
\begin{equation*}
     \nu(t) = 0.5t + 0.1.
    \end{equation*}
   By \eqref{eq:rho_gauss}, the resulting $\rho(t) = -2/\sqrt{19}$. In Figure \ref{fig:1D_bandwidth_example_X}, we show simulated instances of the process on a grid ranging from 0 to 1 with a step size of 0.005. From Figure \ref{fig:1D_bandwdith_example_cor}, we can clearly see the spatial correlation is dependent on the location. More specifically, since $\nu(t)$ is monotone increasing, the process becomes smoother as $t$ increases, as illustrated in the simulated instances and the correlation plot. Furthermore, it is important to note that in this case, $\rho(t) = -2/\sqrt{19}$ is independent of $t$ (see Figure \ref{fig:1D_bandwdith_example_rho}) as stipulated by Theorem \ref{thm:linear_nu}.
\end{example}

\begin{figure}[ht]
\centering
\begin{subfigure}[a]{200pt}
    \centering
    \includegraphics[width = 200pt]{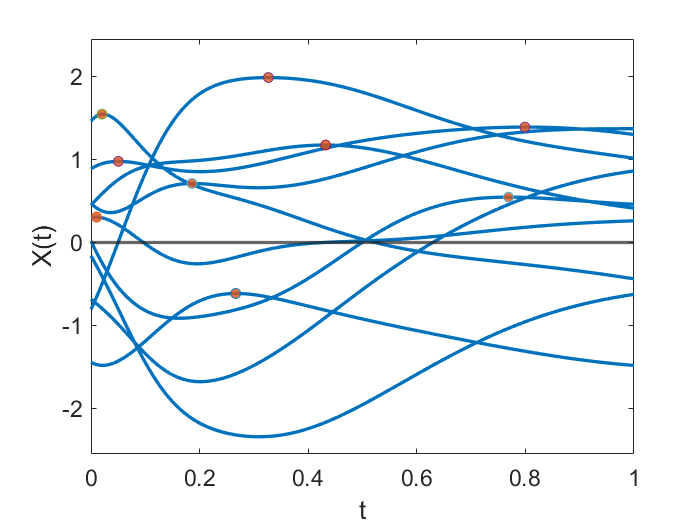}
    \caption{Simulated instances with orange dots indicating the local maxima}
    \label{fig:1D_bandwidth_example_X}
\end{subfigure}
\begin{subfigure}[a]{200pt}
    \centering
    \includegraphics[width = 200pt]{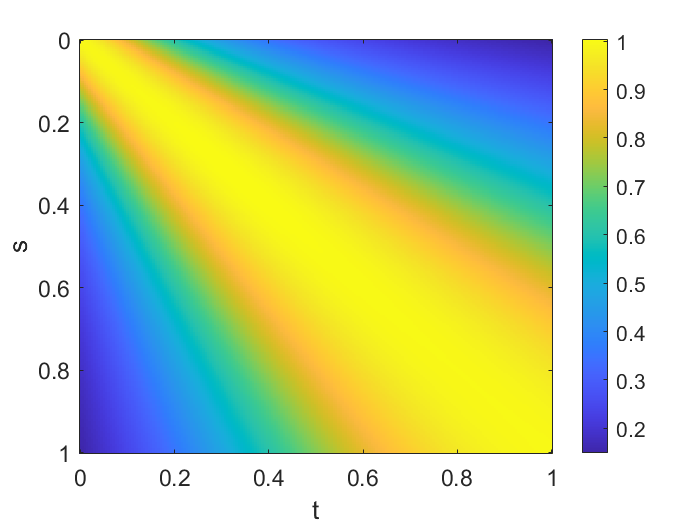}
    \caption{Correlation}
    \label{fig:1D_bandwdith_example_cor}
\end{subfigure}
\begin{subfigure}[a]{200pt}
    \centering
    \includegraphics[width = 200pt]{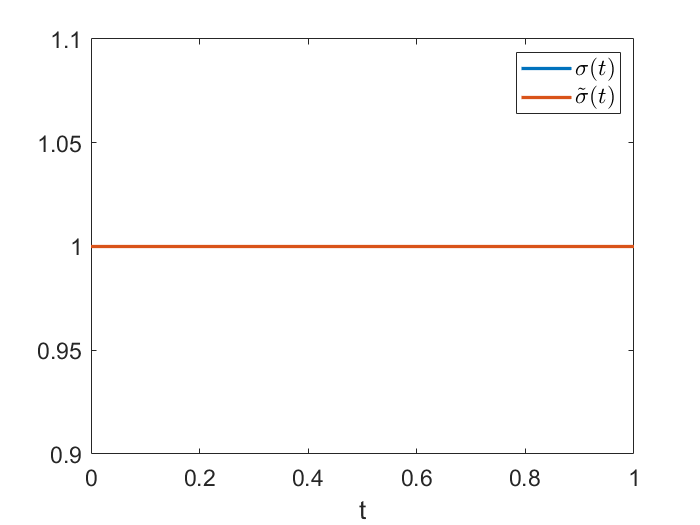}
    \caption{$\sigma(t)$ and $\tilde{\sigma}(t)$}
    \label{fig:1D_bandwdith_example_sigma_t}
\end{subfigure}
\begin{subfigure}[a]{200pt}
    \centering
    \includegraphics[width = 200pt]{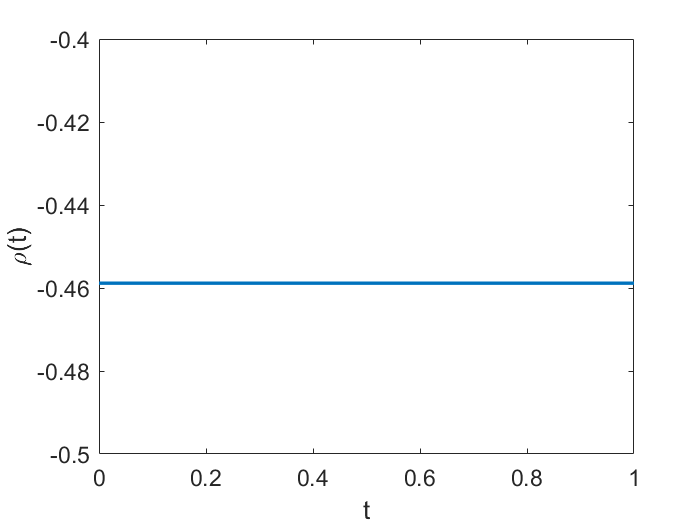}
    \caption{$\rho(t)$}
    \label{fig:1D_bandwdith_example_rho}
\end{subfigure}
\caption{Simulated instances of the Gaussian process with non-constant bandwidth (Gaussian kernel) described in Example \ref{ex:non_const_band}, its spatial correlation, and the parameters $\sigma(t)$, $\tilde{\sigma}(t)$, and $\rho(t)$.}
\label{fig:1D_bandwidth_example}
\end{figure} 

\subsection{The Gaussian process with non-constant bandwidth and variance}\label{sec:non_const_band_var}

By \eqref{eq:rho_constant_sigma_vcov}, the parameter $\rho(t)$ is non-positive when the process has constant variance, but can be positive in general. To take a step further, we study the Gaussian process with non-constant bandwidth and variance and give an example of positive $\rho(t)$. We generalize \eqref{def:non_constant_band} to have non-constant variance:
\begin{align}
    Y(t) &= \sigma(t)X(t)\numberthis
\label{def:non_constant_band_var_XY} = \sigma(t) \int_{-\infty}^\infty \frac{1}{\sqrt{\nu(t)}} k\left(\frac{t-s}{\nu(t)}\right)\,dB(s) ,
\end{align}
where the standard deviation function $\sigma(t)$, $t\in \R$, is a positive non-constant smooth function so that $\Var(Y(t)) = \sigma^2(t)$. Taking first and second derivatives on both sides of \eqref{def:non_constant_band_var_XY}, we have $Y'(t)=\sigma'(t)X(t) + \sigma(t)X'(t)$ and $Y''(t)=\sigma''(t)X(t) + 2\sigma'(t)X'(t) + \sigma(t)X''(t)$. We use the variances and covariances of the first and second order derivatives of $X(t)$ (see \ref{sec:non_constant_bandwidth_Gauss}) as a bridge for computing those of $Y(t)$:
\begin{align*}
    \Var(Y'(t)) =&  \sigma'^2(t) + \sigma^2(t)\Var(X'(t)), \\
    \Var(Y''(t)) =&  \sigma''^2(t) + (4\sigma'^2(t)-2\sigma(t)\sigma''(t))\Var(X'(t)) \\
    & + \sigma^2(t)\Var(X''(t))+ 4\sigma(t)\sigma'(t)\E[X'(t)X''(t)], \\
    \E[Y(t)Y'(t)] = & \sigma(t)\sigma'(t), \\
    \E[Y(t)Y''(t)] = & \sigma(t)\sigma''(t) - \sigma^2(t)\Var(X'(t)), \\
    \E[Y'(t)Y''(t)] = & \sigma'(t)\sigma''(t) + \sigma(t)\sigma'(t)\Var(X'(t)) + \sigma^2(t)\E[X'(t)X''(t)].
\end{align*}
Then the parameter $\rho(t)$, $\tilde{\sigma}(t)$, and the peak height distribution can be obtained by applying \eqref{eq:rho_vcov}, \eqref{eq:sigma_t_vcov}, and \eqref{eq:peak_height_1D}.

\begin{example}\label{ex:non_const_var_band}
We consider the following specification of $Y(t)$ in \eqref{def:non_constant_band_var_XY}
\begin{align*}
    k(s) & = \sqrt{2}\pi^{1/4}\phi(s), \\ 
    \nu(t) &= 0.5t + 0.1, \\
    \sigma(t) & = 8t^2 - 10t + 6, \numberthis \label{eq:non_constant_band_nu_sigma_t}
\end{align*}
 to give an example of positive $\rho(t)$. In Figure \ref{fig:1D_bandwidth_sigma_example_X}, we show simulated instances of this process on a grid ranging from 0 to 1 with a step size of 0.005. As displayed in Figure \ref{fig:1D_bandwdith_sigma_example_rho}, $\rho(t)$ reaches its minimum -0.54 at around $t = 0.2$ and maximum 0.38 at around $t = 0.7$. Taking a closer look at $t = 0.2$ and $0.7$, we compare the peak height distribution at $t = 0.2$ (negative $\rho(t)$) versus $t = 0.7$ (positive $\rho(t)$) in Figure \ref{fig:1D_bandwidth_sigma_example_density}. Given that the process has zero mean for all $t$, and in particular at $t = 0.7$, it is quite surprising that both the mean and median of the peak height distribution are negative. In conclusion, for a general Gaussian process, the parameter $\rho(t)$ can have either a positive or negative value depending on the specification of the process and the location $t$.
\end{example}

\begin{figure}[!ht]
\centering
\begin{subfigure}[a]{200pt}
    \centering
    \includegraphics[width = 200pt]{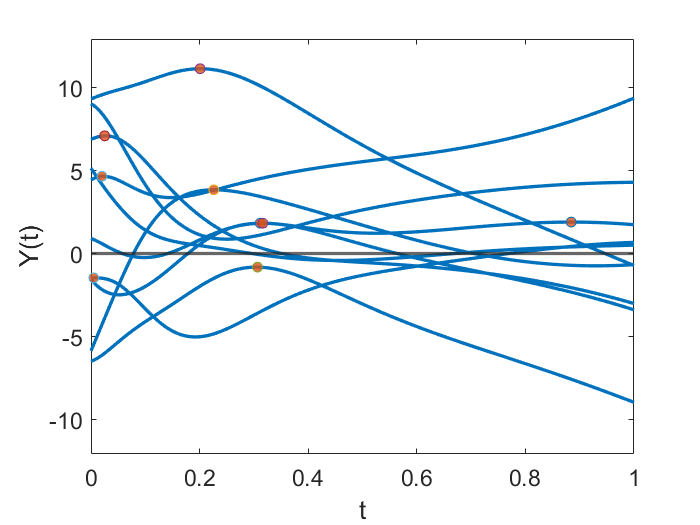}
    \caption{Simulated instances with orange dots indicating the local maxima}
    \label{fig:1D_bandwidth_sigma_example_X}
\end{subfigure}
\begin{subfigure}[a]{200pt}
    \centering
    \includegraphics[width = 200pt]{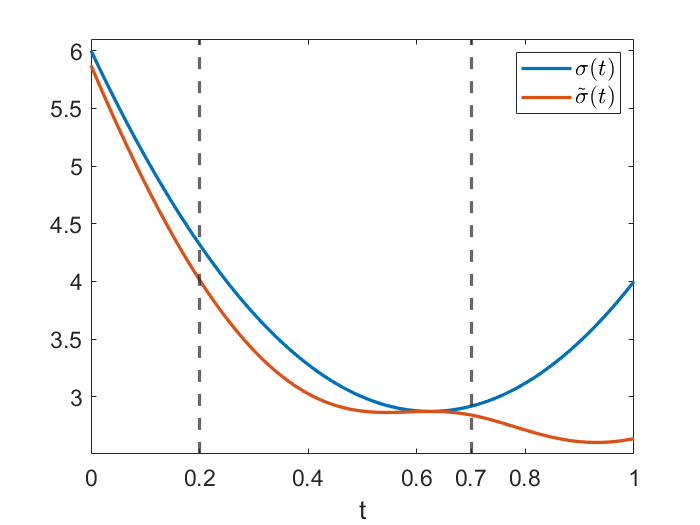}
    \caption{$\sigma(t)$ and $\tilde{\sigma}(t)$}
    \label{fig:1D_bandwdith_sigma_example_sigma_t}
\end{subfigure}
\begin{subfigure}[a]{200pt}
    \centering
    \includegraphics[width = 200pt]{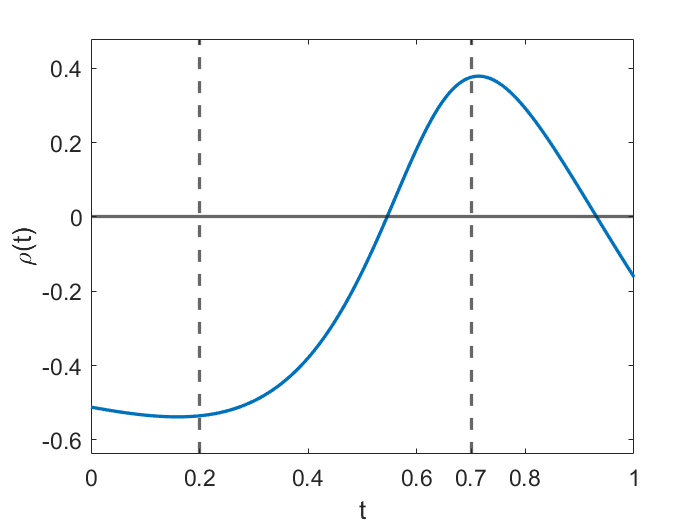}
    \caption{$\rho(t)$}
    \label{fig:1D_bandwdith_sigma_example_rho}
\end{subfigure}
\caption{Simulated instances of the Gaussian process with non-constant bandwidth and variance described in Example \ref{ex:non_const_var_band}, and the parameters $\sigma(t)$, $\tilde{\sigma}(t)$, and $\rho(t)$.}
\label{fig:1D_bandwidth_sigma_example}
\end{figure}

\begin{figure}[!ht]
\centering
\begin{subfigure}[a]{200pt}
    \centering
    \includegraphics[width = 200pt]{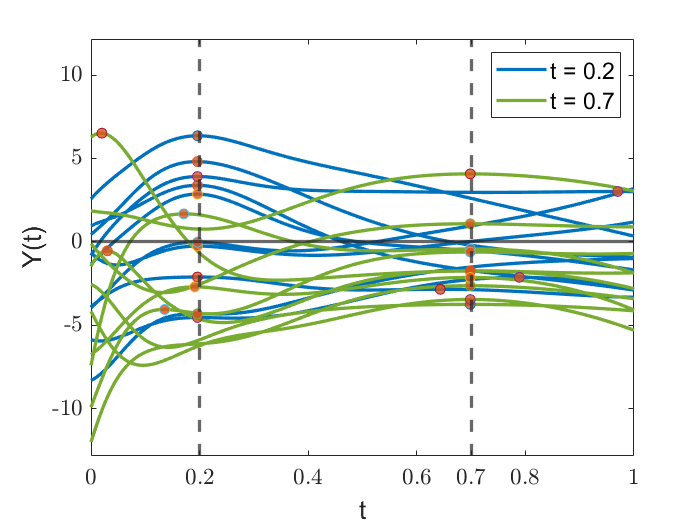}
    \caption{Selected simulated instances having a peak near $t = 0.2$ versus those having a peak near $t = 0.7$}
    \label{fig:1D_bandwidth_sigma_example_X_peak}
\end{subfigure}
\begin{subfigure}[a]{200pt}
    \centering
    \includegraphics[width = 200pt]{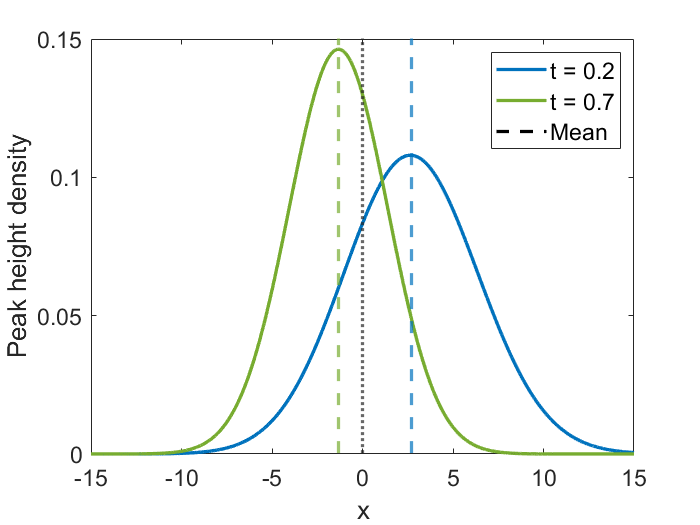}
    \caption{Peak height distribution at $t = 0.2$ versus $t = 0.7$}
    \label{fig:1D_bandwidth_sigma_example_density}
\end{subfigure}
\caption{The peak height distribution of the Gaussian process with non-constant bandwidth and variance described in Example \ref{ex:non_const_var_band} at $t = 0.2$ (negative $\rho(t)$) versus $t = 0.7$ (positive $\rho(t)$).}
\label{fig:1D_bandwidth_sigma_peak_height}
\end{figure}

\section{Multidimensional non-stationary Gaussian random fields: Scale space}\label{sec:scale_space_field}

We have explored the peak height distribution of 1D non-stationary Gaussian processes. However, many real-world problems exist in higher dimensions. As we step into the realm of higher dimensional non-stationary Gaussian random fields, the complexity of the Kac-Rice formula increases drastically, making direct evaluation difficult. 

In this section, with the results we have developed in 1D as a foundation, we explore the properties of the peak height distribution of an important multidimensional non-stationary Gaussian random field: the scale space Gaussian random field. Although these properties are not enough to allow us to derive the explicit formula for the peak height distribution, they are helpful for a better understanding of the spatial structure of these non-stationary random fields. Moreover, these properties effectively increase the efficiency of the numerical methods we highlight later in this paper (see Section \ref{sec:simulation}).

\subsection{The scale space Gaussian random field}

First, we consider the following zero-mean unit-variance 2D scale space Gaussian random field
\begin{equation}
X(t,\nu) = \int \frac{1}{\sqrt{\nu}}k\left(\frac{s-t}{\nu}\right) d B(s),
\label{eqn:scale_field_2d}
\end{equation}
where $k(s)$, $s \in R$, is a smooth kernel with $\int k^2(s) ds = 1$. In the scale space field, the smoothing bandwidth is not fixed and is treated as a parameter, so that the parameter space $(t,\nu)$ is 2D. Solving the peak height distribution of a scale space field helps to detect peaks of unknown location and scale. 

If we draw a slice plane for the 2D scale space field, the slice along the plane is a 1D Gaussian process with linearly increasing bandwidth, the same process as the special case studied in Section \ref{sec:non_const_band} above. An example of the 2D scale space field and the slice plane is displayed in Figure \ref{fig:2d_scale_slice}, and the slice along the plane is displayed in Figure \ref{fig:2d_scale_slice_1d}. The 2D scale space field can be seen as an extension of the Gaussian process with non-constant bandwidth. 

For the Gaussian process with non-constant bandwidth, we have demonstrated that when the bandwidth $\nu(t)$ is a linear function of $t$, the peak height distribution remains the same over the domain. Therefore, the peak height distribution of the 1D process obtained by slicing a 2D scale space field with a flat plane is independent of $t$. Next, we generalize this result to 2D scale space fields and, even further, to scale space fields of any dimension.

\begin{figure}
\centering
\begin{subfigure}[a]{200pt}
    \centering
    \includegraphics[width = 200pt]{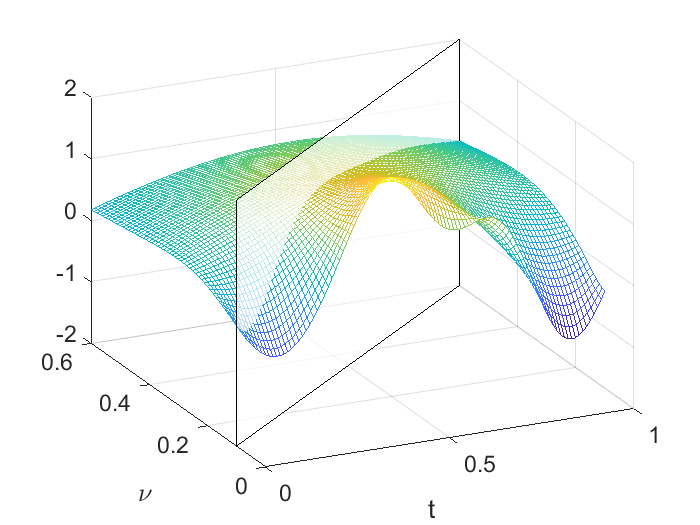}
    \caption{$X(t,\nu)$ and a slice plane $\nu(t) = 0.5t + 0.1$}
    \label{fig:2d_scale_slice}
\end{subfigure}
\begin{subfigure}[a]{200pt}
    \centering
    \includegraphics[width = 200pt]{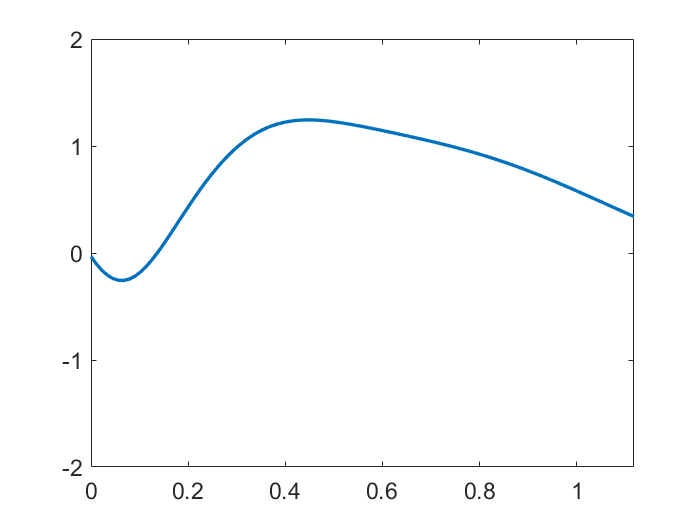}
    \caption{Slice along the plane}
    \label{fig:2d_scale_slice_1d}
\end{subfigure}
\caption{A single instance of the 2D scale space field and a slice plane $\nu(t) = 0.5t + 0.1$.}
\label{fig:2d_scale_example}
\end{figure}

To evaluate the Kac-Rice formula and compute the peak height distribution of the scale space field, we once again need the variances and covariances of the first and second order derivatives. To simplify the computation, we let $v = -\log\nu$, so that $X(t,v)$ is stationary in $v$ for any fixed $t$, and $X(t,v)$ is also stationary in $t$ for any fixed $v$ (\citealp{Siegmund1995}). We replace $\nu$ in \eqref{eqn:scale_field_2d} with $e^{-v}$ and the covariance of $X$ after the change of variable is 
\begin{equation}\label{eq:cov_scale}
\E{[X(t_1,v_1)X(t_2,v_2)]} = e^{(v_1+v_2)/2} \int k((s-t_1)e^{v_1}) k((s-t_2)e^{v_2}) ds.
\end{equation}
Define
\begin{align*}
\nabla X &= \left(\frac{\partial X}{\partial t}, \frac{\partial X}{\partial v}\right)^T, \quad \nabla^2 X =  \begin{pmatrix}
\frac{\partial^2 X}{\partial t^2} & \frac{\partial^2 X}{\partial t \partial v}\\
\frac{\partial^2 X}{\partial t \partial v} & \frac{\partial^2 X}{\partial v^2}
\end{pmatrix},
\end{align*}
\noindent and their joint distribution
\begin{equation*}
\begin{pmatrix}
\nabla X \\
{\rm vec}(\nabla^2 X)
\end{pmatrix} \sim
N\left(0,\begin{pmatrix}
\Sigma_{11} & \Sigma_{12}\\
\Sigma_{21} & \Sigma_{22}
\end{pmatrix}\right).
\end{equation*}

The variances and covariances can be obtained using (5.5.5) in \citet{Adler2007}.
\begin{equation*}
\Sigma_{11}  = \begin{pmatrix}
e^{2v}\int k'^2(s) ds & 0 \\
0 & \int (\frac{1}{2}k(s) + sk'^2(s)) ds
\end{pmatrix},
\end{equation*}
\begin{equation*}
\Sigma_{22} = \begin{pmatrix}
e^{4v}\int k''^2(s) ds &  e^{2v}\int k''(s) \eta(s) ds & 0\\
e^{2v}\int k''(s) \eta(s) ds & \int \eta^2(s) ds & 0 \\
0 & 0 & e^{2v} \int (\frac{3}{2}k'(s)+sk''^2(s)) ds
\end{pmatrix},
\end{equation*}
\begin{equation*}
\Sigma_{21}  = 
\begin{pmatrix}
0 & e^{2v}\int k''(s)(\frac{1}{2}k(s)+sk'(s)) ds \\
0 & \int (\frac{1}{2}k(s)+sk'(s))\eta(s) ds \\
e^{2v} \int (\frac{3}{2}k'(s)+sk''(s))(\frac{1}{2}k(s)+sk'(s)) ds& 0
\end{pmatrix},
\end{equation*}
where $\eta(s) = 1/4k(s) + 2sk'(s) + s^2k''(s)$. Using the properties of conditional normal distribution, we obtain
\begin{equation*}
\nabla^2 X| \nabla X = 0 \sim N(0, \Sigma_{22}-\Sigma_{21}\Sigma_{11}^{-1}\Sigma_{12}),
\label{eqn:con_dist_den}
\end{equation*}
and
\begin{equation*}
\nabla^2 X| X = x, \nabla X = 0 \sim N(\tilde{\Sigma}_{21} \tilde{\Sigma}_{11}^{-1}(x,0,0)^T, \Sigma_{22}-\tilde{\Sigma}_{21}\tilde{\Sigma}_{11}^{-1}\tilde{\Sigma}_{12}),
\end{equation*}
where
\begin{equation*}
\tilde{\Sigma}_{11}  = \begin{pmatrix}
1 & 0 \\
0 & \Sigma_{11}
\end{pmatrix}, \qquad
\tilde{\Sigma}_{21} =        
    \begin{pNiceArray}{cc}[margin]
    -e^{2v}\int k'^2(s) ds & \Block{3-1}{\Sigma_{21}} \\
    \frac{1}{4} - \int s^2 k'^2(s) ds  & \\
    0 & \\
    \end{pNiceArray}.
\end{equation*}
\subsection{The scale space Gaussian random field: Gaussian kernel} \label{sec:scale_gaussian}

Consider the special case when the kernel function is Gaussian, i.e.
\begin{equation*}
k(s) = \sqrt{2}\pi^{1/4}\phi(s). 
\end{equation*}
The covariance matrices can be further simplified as 
\begin{equation*}\label{eqn:scale_gaussian_cov}
\Sigma_{11}  = \begin{pmatrix}
\frac{1}{2}e^{2v} & 0 \\
0 & \frac{1}{2}
\end{pmatrix}, \quad
\Sigma_{22} = \begin{pmatrix}
\frac{3}{4}e^{4v} &  \frac{1}{4}e^{2v} & 0\\
\frac{1}{4}e^{2v} & \frac{7}{4} & 0 \\
0 & 0 & \frac{5}{4}e^{2v}
\end{pmatrix}, \quad
\Sigma_{21}  = 
\begin{pmatrix}
0 & -\frac{1}{2}e^{2v}\\
0 & 0 \\
\frac{1}{2}e^{2v} & 0
\end{pmatrix},
\end{equation*}
and
\begin{equation*}
\nabla^2 X | \nabla X = 0 \sim N \left(0,
\begin{pmatrix}
\frac{1}{4}e^{4v} & \frac{1}{4}e^{2v} & 0\\
\frac{1}{4}e^{2v} & \frac{7}{4} & 0\\
0 & 0 & \frac{3}{4}e^{2v}
\end{pmatrix}\right).
\label{eqn:scale_gaussian_X''}
\end{equation*}

A pattern can be observed from these matrices: the variance of the derivative with respect to $v$ is constant, and the variance of the derivative with respect to $t$ has the form of a constant multiplied by $e^{2v}$. The second-order derivatives show a similar pattern. This pattern is crucial for proving the peak height distribution of the scale space field independent of $t$ and $v$.

% We first generate $B$ i.i.d multivariate Gaussian distribution \eqref{dist:X''} samples $\xi_i$, $i = 1...B$ and calculate the corresponding Hessian matrix
% \begin{equation}
% H_i = \begin{pmatrix}
% \xi_{i1} & \xi_{i3} \\
% \xi_{i3} & \xi_{i2}
% \end{pmatrix}.
% \end{equation}

% The denominator of \eqref{eq:point} can be calculated as
% \begin{equation}
% \frac{1}{\sum_{i = 1}^{B}\mathbbm{1}_{{\rm det}H_i<0}} \sum |{\rm det}H_i|\mathbbm{1}_{{\rm det}H_i<0}
% \end{equation}

\subsection{Peak height distribution of the scale space Gaussian random field}
The following result is a generalization of Theorem \ref{thm:linear_nu} for the scale space field.
\begin{thm}
\label{thm:scale_space}
Consider the (N+1)-dimensional scale space Gaussian random field
\begin{equation}
X(t,\nu) = \int \frac{1}{\sqrt{\nu^N}}k\left(\frac{s-t}{\nu}\right) d B(s),
\label{eqn:scale_field}
\end{equation}
where $k(s)$, $s \in R^{N}$, is an isotropic smooth kernel with $\int k^2(s) ds = 1$. The peak height distribution of $X(t,\nu)$ is independent of $(t,\nu)$. 
\end{thm}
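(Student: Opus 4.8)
The plan is to exploit a self-similarity (scaling plus translation) symmetry of the scale space field, and to show that this symmetry acts transitively on the parameter space $\{(t,\nu): t \in \R^N,\ \nu > 0\}$ while leaving the peak height functional \eqref{eq:point} invariant.

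First I would establish the exact distributional self-similarity
\[
\{X(\lambda t + b,\ \lambda \nu)\} \stackrel{d}{=} \{X(t,\nu)\}\quad\text{as random fields, for every } \lambda>0,\ b\in\R^N.
\]
To see this, substitute $s = \lambda u + b$ in the white-noise integral \eqref{eqn:scale_field} defining $X(\lambda t + b,\lambda\nu)$; the argument of the (isotropic) kernel becomes $(u-t)/\nu$ and the prefactor becomes $(\lambda\nu)^{-N/2}$. Setting $d\tilde B(u):=\lambda^{-N/2}\,dB(\lambda u + b)$ defines a new standard Gaussian white noise $\tilde B$ (a direct covariance computation shows $\Var\int f\,d\tilde B=\int f^2$), and the prefactors combine to give exactly $\int \nu^{-N/2}k((u-t)/\nu)\,d\tilde B(u)$, which has the same law as $X(t,\nu)$. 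Isotropy of $k$ is what guarantees that the single scalar $\nu$ rescales all coordinate directions uniformly, so that the substitution closes.

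Next I would pass to $v=-\log\nu$ and track the induced transformation on the derivatives. In these coordinates the symmetry reads $X(e^{-a}t+b,\ v+a)\stackrel{d}{=}X(t,v)$, i.e.\ the reparametrized field $\tilde X(z):=X(Az+c)$ with $z=(t,v)$, $A=\diag(e^{-a}I_N,1)$ and $c=(b,a)$ has the same law as $X$. Differentiating gives $\nabla\tilde X=A\,\nabla X$ and $\nabla^2\tilde X=A\,\nabla^2 X\,A$ (evaluated at $Az+c$), while the value $\tilde X=X$ is unchanged. I would then verify that the ratio \eqref{eq:point} is invariant under this action: the positive diagonal matrix $A$ is invertible, so $\{\nabla\tilde X=0\}=\{\nabla X=0\}$; the congruence $\nabla^2 X\mapsto A\,\nabla^2 X\,A$ preserves negative definiteness by Sylvester's law of inertia, so $\{\nabla^2\tilde X\prec 0\}=\{\nabla^2 X\prec 0\}$; and $|\det\nabla^2\tilde X|=(\det A)^2|\det\nabla^2 X|$ introduces the same factor $e^{-2aN}$ in numerator and denominator, which cancels. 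Hence the peak height distribution at $(t,v)$ equals that at $(e^{-a}t+b,\ v+a)$. (Equivalently, one may invoke the diffeomorphism-invariance of the peak height distribution in \citet{CHENG2020108672}, since $z\mapsto Az+c$ is an affine diffeomorphism.)

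Finally I would conclude by transitivity: choosing $a=v'-v$ moves the scale coordinate from $v$ to any target $v'$, and the free translation $b$ then moves the location coordinate to any target $t'$, so the maps $(t,v)\mapsto(e^{-a}t+b,\ v+a)$ act transitively on the whole parameter space. Since the peak height distribution is constant along every orbit, it is independent of $(t,\nu)$, as claimed; this generalizes Theorem \ref{thm:linear_nu} and accounts for the $e^{2v},e^{4v}$ scaling pattern noted for the Gaussian-kernel covariance matrices (the same equalities force $\Var(\partial_t X)\propto e^{2v}$ and $\Var(\partial_v X)$ constant). The main obstacle is the first step: setting up the white-noise change of variables so that $\tilde B$ is genuinely a standard white noise (the $\lambda^{-N/2}$ normalization), and then confirming that the induced scaling of the gradient and Hessian is precisely the congruence under which the Kac–Rice ratio is invariant.
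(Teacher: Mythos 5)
Your proof is correct, but it takes a genuinely different route from the paper's. The paper works pointwise and computationally: using (5.5.5) of Adler and Taylor it derives the covariances of $X$, $\nabla X$, $\nabla^2 X$ explicitly, observes that the rescaled quantities $G = D\nabla X$ and $H = D\,\nabla^2 X\,D$ with $D = \diag(e^{-v}I_N,1)$ have a joint law with $X$ that is free of $(t,v)$, and then cancels the factor $e^{2Nv} = (\det D)^{-2}$ in the Kac--Rice ratio, invoking Sylvester's criterion to identify $\{\nabla^2 X \prec 0\}$ with $\{H \prec 0\}$. You instead establish one distributional identity --- the self-similarity $\{X(\lambda t + b, \lambda\nu)\} \stackrel{d}{=} \{X(t,\nu)\}$, proved via the $L^2$-isometry $f \mapsto \lambda^{-N/2} f((\cdot - b)/\lambda)$ acting on Wiener integrals --- and combine it with invariance of the ratio \eqref{eq:point} under affine reparametrization, i.e.\ under the congruence $\nabla^2 X \mapsto A\,\nabla^2 X\,A$, finishing by transitivity of the group action on the parameter space. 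The closing algebra (inertia, cancellation of $(\det A)^2$) is the same in both arguments, but the source of the $(t,\nu)$-independence differs: symmetry in yours, explicit covariance computation in the paper's. Your route is more conceptual, avoids all covariance calculations, and shows two things the paper's proof does not make visible: isotropy of $k$ is not actually needed (the substitution closes for any kernel, contrary to your own remark attributing it to isotropy), and the same orbit argument yields $(t,\nu)$-invariance of any functional of $(X,\nabla X,\nabla^2 X)$ that is insensitive to congruence of the Hessian. What the paper's computation buys in exchange is the covariance matrices themselves, which are reused as inputs to the numerical algorithms and simulations of Sections \ref{sec:num_alg} and \ref{sec:simulation}, and the identification of the law of the rescaled Hessian block $A$ as proportional to ${\rm GOI}(1/2)$. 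If you write your version up, record two small points of rigor: conditioning on $\{\nabla \tilde X = 0\}$ versus $\{\nabla X = 0\}$ should be justified by noting that for jointly Gaussian vectors, conditioning on $A\xi = 0$ with $A$ deterministic and invertible gives the same conditional law as conditioning on $\xi = 0$; and the white-noise change of variables is cleanest phrased through the $L^2$ isometry above, which preserves the joint law of all the integrals $\int f\,dB$ simultaneously and hence gives equality in law of the fields, not just of one-dimensional marginals.
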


Theorem \ref{thm:scale_space} demonstrates the peak height distribution of the scale space field remains the same across the domain although it is non-stationary. This discovery is helpful both in theoretical calculation of the peak height distribution and in application. In theoretical calculation of the peak height distribution, we can choose some specific $\nu$ that helps to simplify the evaluation of the Kac-Rice formula, as the final result is independent of $\nu$. In application, this property vastly reduces the running time, particularly when dealing with multidimensional data like fMRI images.

\section{Numerical implementation} \label{sec:num_alg}

Gaussian random field theory is widely used as a statistical framework in the analysis of fMRI images (\citealp{Friston1994}; \citealp{Worsley1996}). As we mentioned, the main bottleneck for peak inference in fMRI analysis using RFT is the model assumptions particularly stationarity. Making the stationarity assumption is common in practice (\citealp{Chumbley2010}; \citealp{Annals}) to simplify the derivation of the analytical formula for the peak height distribution, but has been shown not to hold in practice (\citealp{Eklund2016}). For datasets for which is it not reasonable to assume stationarity we must explore a more general solution.
\subsection{Algorithm}

In Section \ref{sec:1d_nonstationary}, we derived the peak height distribution formula for 1D non-stationary Gaussian processes. When it comes to multidimensional non-isotropic and non-stationary Gaussian random fields, deriving a closed form for the peak height distribution \eqref{eq:point} is challenging. In order to evaluate these quantities we observe that they can in fact be computed numerically. In particular for each $t$, ($X(t)$, $\nabla X(t)$,  {\rm vech}$(\nabla^2 X(t))$) are jointly Gaussian. As such the expectations in the numerator and the denominator of \eqref{eq:point} are conditional expectations of Gaussian random variables. Given the covariance matrix corresponding to the joint distribution, it is thus possible to simulate from this distribution and provide a numerical estimate of the expectation. The details for this are provided in Algorithm \ref{alg:Kac-Rice-1}.  
      
\begin{algorithm}[ht]
\caption{Numerical Kac-Rice}\label{alg:Kac-Rice-1}
\begin{algorithmic}
\Input (the variances and covariances at a given location $t$)
 \Desc{fv}{the variance of the field $X$}
 \Desc{dv}{the $d \times d$ variance-covariance matrix of the first derivatives $\nabla X$}
 \Desc{d2v}{the $[d(d+1)/2] \times [d(d+1)/2]$ variance-covariance matrix of the second derivatives {\rm vech}($\nabla^2 X$)}
 \Desc{fdcov}{the $1 \times d$ covariance matrix between the field $X$ and its first derivatives $\nabla X$}
 \Desc{fd2cov}{the $1 \times [d(d+1)/2]$ covariance matrix between the field $X$ and its second derivatives {\rm vech}($\nabla^2 X$)}
 \Desc{dd2cov}{the $d \times [d(d+1)/2]$ covariance matrix between the first derivatives $\nabla X$ and second derivatives {\rm vech}($\nabla^2 X$)}
 \Desc{$u$}{a vector giving the thresholds}
 \Desc{\texttt{niters}}{the number of iterations used to calculate the expectation}
\EndInput
\Ensure the variance-covariance matrix is positive definite
\State Add $-\infty$ to the end of $u$
\State Compute the covariance matrix $\boldsymbol{\Sigma}$ of $(X,\nabla^2 X)|\nabla X = 0$ with the input variances and covariances
\State Generate i.i.d. samples ($X_i$, $H_i$) (\texttt{i = 1...niters}) from the multivariate Gaussian distribution $N(\boldsymbol{0},\boldsymbol{\Sigma})$
    \For{\texttt{i = 1...niters}}
    \State Compute $\det(H_i)$
\EndFor
\For{\texttt{j = 1...length($u$)}}
    \State Compute KR[j] $= \frac{1}{\texttt{niters}} \sum|\det (H_i)|\mathbbm{1}_{H_i \prec 0}\mathbbm{1}_{X_i > u[\text{j}]}$
\EndFor  
\State \Return KR[:-1] /  KR[-1]
\end{algorithmic}
\end{algorithm}

At high thresholds $u$, Algorithm \ref{alg:Kac-Rice-1} can be applied but will require a large number of simulations in order to calculate the peak height distribution accurately. This is due to the indicator term $\mathbbm{1}_{\{X(t) > u\}}$ in \eqref{eq:point}. As $u \to -\infty$, the indicator has no effect, so that all drawn samples contribute to the numerical approximation.  However, as $u\to \infty$, the probability that $X(t)>u$ shrinks to zero, leading to a significantly smaller effective sample size compared to the number of simulated instances. If we take the 3D scale space field as an example and set $u = 3.91$, which is the $99^\text{th}$ percentile of the peak height distribution, then the effective sample size is only about $1-\Phi(3.91) = 0.005\%$ of the total number of simulated instances. In order to achieve the same accuracy, we need to sample about $1/0.005\% = 20000$ times more data for $u = 3.91$ compared to $u = -\infty$ which is computationally intensive. In Algorithm \ref{alg:Kac-Rice-2}, we solve this issue by computing the numerator of \eqref{eq:point} as
\begin{align*}
   & \E[|{\rm det} \nabla^2 X(t)|\mathbbm{1}_{\{\nabla^2 X(t) \prec 0\}}\mathbbm{1}_{\{X(t)>u\}} | \nabla X(t)=0] \\
   = &  \E[|{\rm det} \nabla^2 X(t)|\mathbbm{1}_{\{\nabla^2 X(t) \prec 0\}} | X(t)>u, \nabla X(t)=0] p(X(t) > u | \nabla X(t)=0).
\end{align*}
We generate $X(t)$ from a truncated normal distribution to bound $X(t)$ from below by $u$, leading to a more efficient approach for large $u$.    
 
\begin{algorithm}
\caption{Numerical Kac-Rice for large u}\label{alg:Kac-Rice-2}
\begin{algorithmic}
\Input
 \State Same as Algorithm 1
\EndInput
\State Add $-\infty$ to the end of u
\State Compute the covariance matrix $\boldsymbol{\Sigma}$ of $\nabla^2 X|X,\nabla X = (0,0)$ with the input variances and covariances
\State Compute the probability density $p_{\nabla X}$ of $\nabla X$ with dv
\State Compute the tail distribution $P_{X | \nabla X = 0}$ of $X | \nabla X = 0$ with fv, dv, and fdcov
\State Generate i.i.d. sample $H_i$ (\texttt{i = 1...niters}) from the multivariate Gaussian distribution $N(\boldsymbol{0},\boldsymbol{\Sigma})$ 
\For{\texttt{j = 1...length($u$)}}
    \State Generate i.i.d. sample $\tilde{X}_i$ (\texttt{i = 1...niters}) from a truncated Gaussian distribution $X|X>u[\text{j}]$.
    \For{\texttt{i = 1...niters}}
        \State Compute $\tilde{H}_i = H_i + \E[\nabla^2 X|X, \nabla X = (\tilde{X}_i,0)]$ so that $\tilde{H}_i$ follows the same distribution as $\nabla^2 X|X, \nabla X = (\tilde{X}_i,0)$ 
        \State Compute $\det(\tilde{H}_i)$
    \EndFor
    \State Compute  KR[j] $=\frac{1}{\texttt{niters}} \sum |\det (\tilde{H}_i)|\mathbbm{1}_{\tilde{H}_i \prec 0}$
    \State Update KR[j] by multiplying $p_{\nabla X}(0)P_{X | \nabla X = 0}(\text{u}[\text{j}])$
\EndFor   
\State \Return KR[:-1] /  KR[-1]
\end{algorithmic}
\end{algorithm}

The most difficult part of evaluating the peak height distribution \eqref{eq:point} is computing the expectation of the determinant of the Hessian matrix $\nabla^2 X(t)$ given the field $X(t)$ and the gradient $\nabla X(t)$. However, deriving the joint distribution of ($X(t)$, $\nabla X(t)$, $\nabla^2 X(t)$) is relatively easy. As such our numerical algorithms get around this difficulty and make it possible to calculate the peak height distribution \eqref{eq:point}, without imposing restrictive assumptions.

\subsection{Computational efficiency} \label{sec:sim_efficiency}
If the underlying distribution of the random fields is known then an alternative relatively simple but computationally expensive means of estimating the peak height distribution is via direct simulation. This can be done by generating a sufficiently large number of instances of the random field of interest on a discrete lattice, and recording the heights of all observed local maxima from these simulated instances. The empirical distribution of the simulated peak heights can be used to approximate the true peak height distribution. Direct simulation is easy to implement but requires a significant amount of time to run. The low efficiency can be attributed to two primary factors. First, generating the field itself provides more information than we actually need to estimate the peak height distribution. Second, the performance heavily depends on the granularity and the size of the grid when we use a discrete lattice to approximate the continuous domain, and the amount of data points grows exponentially as the dimension increases. Also note that direct simulation will struggle at high $u$ levels since only a small proportion, or potentially none, of the simulated peaks exceed the threshold.

To compare the performance between direct simulation and the numerical Kac-Rice Algorithm \ref{alg:Kac-Rice-1}, we apply these two approaches to estimate the peak height distribution of a scale space field while controlling the number of simulations. In the direct simulation, the 3D scale space field is generated from \eqref{eqn:scale_field} over a grid of size 20 $\times$ 20 $\times$ 20 pixels with scale parameter $\nu$ ranging from 0.2 to 1.2. For the numerical Kac-Rice method, we apply Algorithm \ref{alg:Kac-Rice-1}. The comparison of runtime for the two methods is presented in Table \ref{tab:num_KR}, and the simulated CDF is displayed in Figure \ref{fig:sim_vs_num}. 

\begin{table}[h]
\caption{Runtime of direct simulation versus numerical Kac-Rice Algorithm \ref{alg:Kac-Rice-1} for estimating the peak height distribution of a scale space field. The total number of observed peaks in simulated instances is provided in the parenthesis.}
\centering
\begin{tabular}{l 
                c 
                c} \toprule
 \# of simulations  & 10,000  & 100,000 \\ \toprule
    Direct simulation & 1,306.42s (221 peaks)  & 13,293.85s (2441 peaks)\\
    Numerical Kac-Rice & 0.20s & 1.50s\\
\bottomrule
\end{tabular}\label{tab:num_KR}
\end{table}

\begin{figure}
\centering
\begin{subfigure}[a]{200pt}
    \centering
    \includegraphics[width = 200pt]{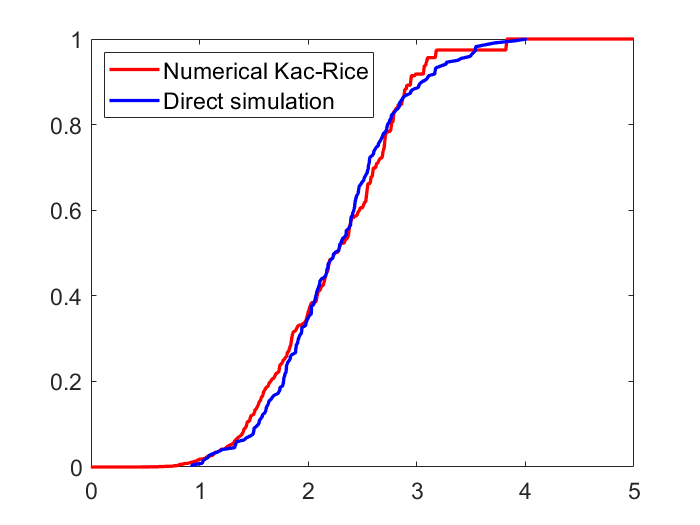}
    \caption{10,000 simulations}
    \label{fig:sim_vs_num_1e4}
\end{subfigure}
\begin{subfigure}[a]{200pt}
    \centering
    \includegraphics[width = 200pt]{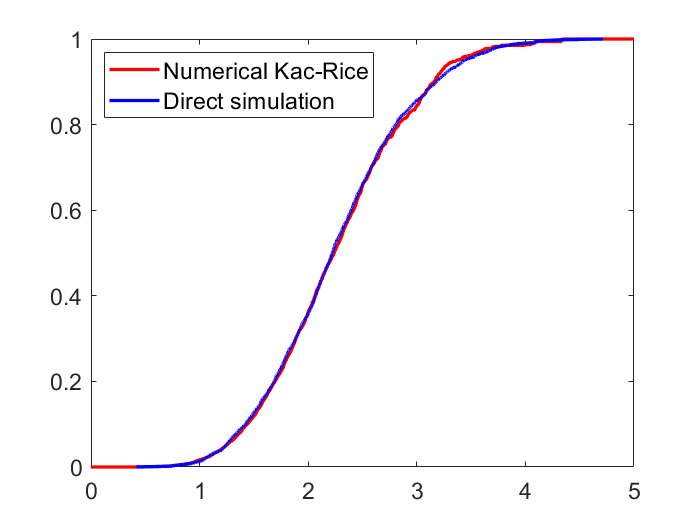}
    \caption{100,000 simulations}
    \label{fig:sim_vs_num_1e5}
\end{subfigure}
\caption{Simulated CDF of the 3D scale space field mentioned in \ref{sec:sim_efficiency}, direct simulation versus numerical Kac-Rice.}
\label{fig:sim_vs_num}
\end{figure}

As demonstrated in Table \ref{tab:num_KR} and Figure \ref{fig:sim_vs_num}, our numerical Kac-Rice algorithm performs comparably to direct simulation with a significantly lower runtime. Neither of the two methods can produce a smooth simulated CDF curve for 10,000 simulations, and when the number of simulations increases to 100,000, they both perform well. Therefore, if estimating the peak height distribution is the only goal, the numerical Kac-Rice algorithm is preferable. This is the case when we explore the theoretical properties of the peak height distribution, such as how different parameters affect the CDF.

\section{Validation via simulations} \label{sec:simulation}

In this section, we present simulation results to validate the theoretical properties of the Gaussian process with non-constant bandwidth and the scale space field. In \ref{sim:non_constant_band}, we validate Theorems \ref{thm:peak_height_1D} and \ref{thm:linear_nu} which only involves 1D Gaussian processes. Drawing samples from a 1D Gaussian process is simple and computationally efficient, so we apply direct simulation to estimate the peak height distribution and relevant parameters of interest. For the scale space field, we have demonstrated in \ref{sec:sim_efficiency} 
that generating the field directly is highly time-consuming. Therefore, in \ref{sim:3d_scale}, we apply the numerical Kac-Rice Algorithm \ref{alg:Kac-Rice-1} to validate Theorem \ref{thm:scale_space}. 

\subsection{The Gaussian process with non-constant bandwidth}\label{sim:non_constant_band}

To validate Theorems \ref{thm:peak_height_1D} and \ref{thm:linear_nu}, we generate $100,000$ Gaussian processes with non-constant bandwidth. $t$ is sampled from a grid of $[0,1]$ containing 200 points. As defined in \eqref{def:non_constant_band}, each process is obtained as the convolution of Gaussian white noise with a Gaussian kernel of non-constant bandwidth function $\nu(t) = 0.5t + 0.1$. Simulated instances of the process are displayed in Figure \ref{fig:1D_bandwidth_example_X}.

\begin{figure}
\centering
\begin{subfigure}[a]{200pt}
    \centering
    \includegraphics[width = 200pt]{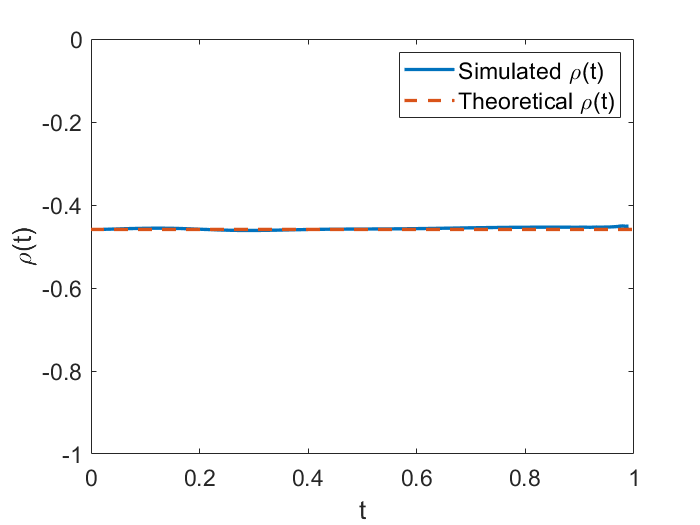}
    \caption{$\rho(t)$}
    \label{fig:1D_linear_rho}
\end{subfigure}
\begin{subfigure}[a]{200pt}
    \centering
    \includegraphics[width = 200pt]{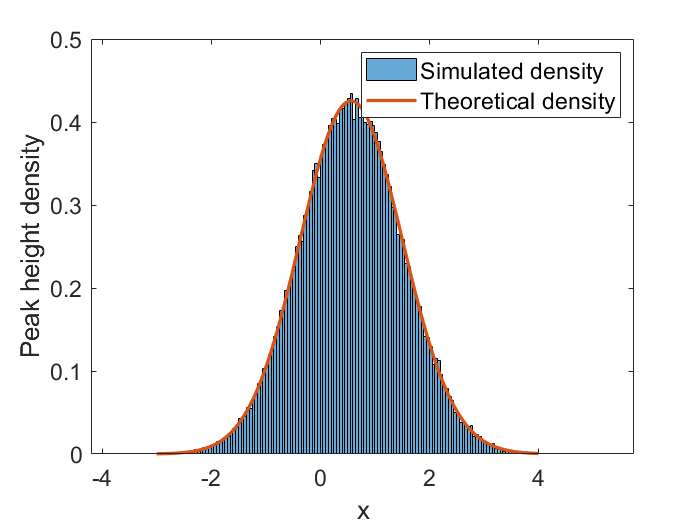}
    \caption{Peak height density}
    \label{fig:1D_linear_PeakHeight}
\end{subfigure}
\caption{1D simulation: $\rho(t)$ and the peak height density of the Gaussian process with non-constant bandwidth.}
\label{fig:1D_bandwdith_peak}
\end{figure}

Since $\nu(t) = 0.5t + 0.1$ is a linear function of $t$, according to Theorem \ref{thm:linear_nu}, both the parameter $\rho(t)$ and the peak height distribution should be independent of $t$. Figure \ref{fig:1D_linear_rho} shows strong evidence that $\rho(t)$ is constant and the simulated value matches the theoretical value. Figure \ref{fig:1D_linear_PeakHeight} demonstrates that the theoretical density curve computed using \eqref{eq:peak_height_1D} at a single location closely resembles the empirical peak height density over the entire domain.

\subsection{3D scale space Gaussian random field}\label{sim:3d_scale}

Theorem \ref{thm:scale_space} states that the peak height distribution of the scale space field does not vary with respect to the scale parameter $\nu$ and location parameter $t$. To validate this, we apply the numerical Kac-Rice Algorithm \ref{alg:Kac-Rice-1} to compare the peak height distribution of a 3D scale space field with different $t$ and $\nu$. The simulation setup considers the Gaussian kernel and three different scenarios, each with a different set of parameters. For each scenario, we perform $10^6$ simulations. The input of the algorithm, which are the variances and covariances of the first and second order derivatives, is derived in \ref{sec:scale_gaussian}. The parameter choices for the three scenarios are listed in Table \ref{tab:sim_para}. 

\begin{table}[h]
\caption{Parameter choices for the three simulation scenarios to validate Theorem \ref{thm:scale_space}. Scenario 1 can be seen as the baseline. The effect of the scale parameter $\nu$ and the location parameter $t$ are reflected in the comparisons between Scenario 1 and 2, and between Scenario 1 and 3, respectively. }
\centering
\begin{tabular}{c
                c
                c
                c
} \toprule
Parameter &  Scenario 1 & Scenario 2 & Scenario 3 \\ \toprule
$t_1$ & 0.5 & 0.5& 1 \\
$t_2$ & 0.5 & 0.5& 0.5 \\
$\nu$  & 0.7 & 0.2& 0.7 \\
\bottomrule
\end{tabular} 
\label{tab:sim_para}
\end{table} 

\begin{figure}
\centering
\includegraphics[width = 200pt,valign=t]{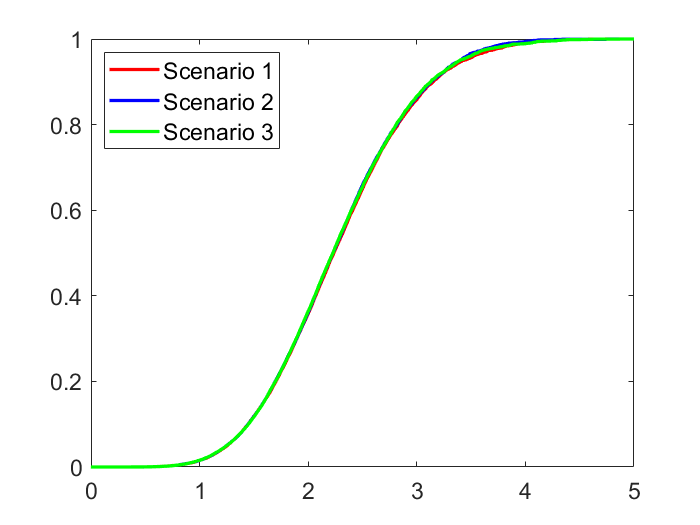}
\label{fig:3D_scale_thm}
\caption{Simulated CDF of the 3D scale space field under the three scenarios described in Table \ref{tab:sim_para}.}
\label{fig:sim_thm_scale}
\end{figure}

As we can see from Figure \ref{fig:sim_thm_scale}, all three scenarios lead to the same peak height distribution. Given the large number of simulations, this provides strong evidence to support Theorem \ref{thm:scale_space}.

\section{Discussion}

\subsection{Peak height distribution in 1D}
In this paper, we have derived the explicit formula for the peak height distribution of non-stationary Gaussian processes in 1D, and demonstrated that the formula has a similar form as that for stationary Gaussian processes while using a different parameterization, and depends on two parameters, $\rho(t)$ and $\tilde{\sigma}(t)$. \citet{Bernoulli} defined the parameter $\kappa$ for isotropic Gaussian random fields because it has the nice property of being invariant to the scaling
of the parameter space and equal to 1 when the covariance is Gaussian. However, here we discovered $\rho(t)$ is a better parameterization, as it has a clear statistical meaning (the conditional correlation of the process and its second derivative given that the gradient is 0) and allows characterizing a larger range of possible processes.  Since $\rho(t)$ is defined as a conditional correlation, we have $|\rho(t)| \leq 1$. When $\rho(t) = \pm1$, the peak height distribution reduces to the Rayleigh distribution.. 

Like $\kappa$, the parameter $\rho(t)$ is also invariant to scaling of the parameter space, but it is equal to $-1/\sqrt{3}$ instead when the covariance is Gaussian. Unlike $\kappa$, $\rho(t)$ is not always positive or negative. Although $\rho(t)$ is non-positive when the process has constant variance, we have provided an example demonstrating it can be positive in general. 

The other parameter $\tilde{\sigma}(t)$ is the scale parameter of the distribution and becomes constant when the process has constant variance. $\tilde{\sigma}(t)$ also has a clear statistical meaning (the conditional variance of the process and
its second derivative given that the gradient is 0) and the mean and standard deviation are proportional to $\tilde{\sigma}(t)$.  

We provide two methods for computing $\rho(t)$ and $\tilde{\sigma}(t)$: either through the moments of the process and its derivatives or through the partial derivatives of the covariance function, depending on which is more convenient in application.  

\subsection{Peak height distribution of the scale space field}

With the help of the peak height density function in 1D, we discover an interesting property of the Gaussian process with non-constant bandwidth and generalize it to the multidimensional scale space field. This property states that the peak height distribution of the scale space field does not vary over the domain. This result is rather surprising, given that the scale space field is nonstationary.

Studying the peak height distribution of the scale space field helps to choose the proper threshold to control the type I error when the signal peaks we aim to detect have different shapes (\citealp{shafie2003}). However, the scale space field is not optimal for detecting peaks that are not rotationally symmetric. In future work, we plan to investigate the Gaussian random field defined on the scale-rotation space and explore the properties of the peak height distribution.

\subsection{Numerical Kac-Rice algorithm}

Apart from the theoretical discoveries, we also provide numerical algorithms for computing the peak height distribution of any smooth Gaussian random field. One inconvenience of the numerical algorithms is that the user needs to compute the variances and covariances of the field and its first and second order derivatives. Since deriving the analytical form for the covariance matrices, like what we have done for the scale space field, involves only taking derivatives of the covariance function \eqref{eq:cov_scale}, the slight inconvenience of extra computation is usually not a big issue in practice. Also, the derivation can be automated by employing symbolic computation tools (e.g. Symbolic Math Toolbox in Matlab and WolframAlpha) as such tools are effective in performing and simplifying complicated differentiation. Alternatively, it is feasible to estimate these variances and covariances empirically from the real or simulated data, as it typically requires a smaller number of data instances to obtain accurate estimates for them compared to estimating the peak height distribution considering not all data instances have local maxima.

Although in practice the numerical methods are convenient and accurate for a sufficiently large number of iterations, we still recommend exploring the properties of the peak height distribution first to better understand its spatial structure and increase the algorithm efficiency. Theorem \ref{thm:scale_space} is a good example. The peak height distribution for the scale space field is independent of the location and scale parameter. This allows us to compute the peak height distribution for a single point and generalize it to the entire domain. 

The amount of time required to generate the field and search for local maxima has been a common issue for evaluating the theoretical properties of the peak height distribution by simulation, especially when the field is multidimensional. We anticipate that the numerical Kac-Rice algorithm will facilitate and accelerate the exploration of the peak height distribution of multidimensional Gaussian random fields.

\section{Acknowledgements}
Y.Z., D.C., S.D. and A.S. were partially supported by NIH grant R01EB026859. Y.Z., D.C. and A.S. were partially supported by NSF grant 1811659. 
\begin{appendices}

\section{}

\begin{proof}[Proof of Lemma \ref{lem:X_cov}]

Following \eqref{def:cov_derv}, the covariance function of $X$ is 
\begin{equation*}
\E[X(t)X(s)] =  h\left(\frac{t+s}{2},(t-s)^2\right). 
\end{equation*}
Taking partial derivatives on both sides, we have
\begin{align*}
    \frac{\partial}{\partial t} \E[X(t)X(s)] = & \frac{1}{2}h'_1\left(\frac{t+s}{2},(t-s)^2\right) + 2(t-s)h'_2\left(\frac{t+s}{2},(t-s)^2\right),\\ 
    \frac{\partial^2}{\partial t^2} \E[X(t)X(s)] = & \frac{1}{4}h''_{11}\left(\frac{t+s}{2},(t-s)^2\right) + 2(t-s)h''_{12}\left(\frac{t+s}{2},(t-s)^2\right) \\
    & + 4(t-s)^2h''_{22}\left(\frac{t+s}{2},(t-s)^2\right) + 2h'_2\left(\frac{t+s}{2},(t-s)^2\right), \\ 
    \frac{\partial^2}{\partial t \partial s} \E[X(t)X(s)] = & \frac{1}{4}h''_{11}\left(\frac{t+s}{2},(t-s)^2\right) - 4(t-s)^2h''_{22}\left(\frac{t+s}{2},(t-s)^2\right) \\
    & - 2 h'_{2}\left(\frac{t+s}{2},(t-s)^2\right), \\
    \frac{\partial^3}{\partial t^2 \partial s} \E[X(t)X(s)] = & \frac{1}{8}h'''_{111}\left(\frac{t+s}{2},(t-s)^2\right) + \frac{1}{2}(t-s)h'''_{112}\left(\frac{t+s}{2},(t-s)^2\right) \\
    & - 2(t-s)^2h'''_{122}\left(\frac{t+s}{2},(t-s)^2\right) - 8(t-s)^3h'''_{222}\left(\frac{t+s}{2},(t-s)^2\right) \\
    & - h''_{12}\left(\frac{t+s}{2},(t-s)^2\right)-12(t-s)h''_{22}\left(\frac{t+s}{2},(t-s)^2\right), \\
    \frac{\partial^4}{\partial t^2 \partial s^2} \E[X(t)X(s)] = & \frac{1}{16}h''''_{1111}\left(\frac{t+s}{2},(t-s)^2\right) - 2(t-s)^2h''''_{1122}\left(\frac{t+s}{2},(t-s)^2\right) \\
    & + 16(t-s)^4h''''_{2222}\left(\frac{t+s}{2},(t-s)^2\right) - h'''_{112}\left(\frac{t+s}{2},(t-s)^2\right) \\
    & + 48(t-s)^2h'''_{222}\left(\frac{t+s}{2},(t-s)^2\right) + 12h''_{22}\left(\frac{t+s}{2},(t-s)^2\right).
\end{align*}
By (5.5.5) in \citet{Adler2007},
\begin{align*}
    \Var(X(t)) & = \left. \E[X(t)X(s)] \right|_{s = t}= h_0(t), \\
    \Var(X'(t)) &=  \left. \frac{\partial^2}{\partial t \partial s} \E[X(t)X(s)] \right|_{s = t} = \frac{1}{4}h''_{11}(t) - 2h'_2(t),\\
    \Var(X''(t)) & = \left. \frac{\partial^4}{\partial t^2 \partial s^2} \E[X(t)X(s)] \right|_{s = t} = \frac{1}{16}h''''_{1111}(t) - h'''_{112}(t) + 12h''_{22}(t), \\
    \E[X(t)X'(t)] &=\left. \frac{\partial}{\partial t} \E[X(t)X(s)] \right|_{s = t}= \frac{1}{2}h'_1(t),\\
    \E[X(t)X''(t)] &=\left. \frac{\partial^2}{\partial t^2} \E[X(t)X(s)] \right|_{s = t}= \frac{1}{4}h''_{11}(t) + 2h'_2(t), \\
    \E[X'(t)X''(t)] &=\left. \frac{\partial^3}{\partial t^2 \partial s} \E[X(t)X(s)] \right|_{s = t}= \frac{1}{8}h'''_{111}(t) - h''_{12}(t).
\end{align*}

\end{proof}

\begin{proof}[Proof of Theorem \ref{thm:peak_height_1D}]

Let $Z(t) = X(t)/\sigma(t)$ so that $Z(t)$ has unit variance. To apply the Kac-Rice formula \eqref{eq:point} to compute the peak height distribution, we need the variances and covariances of the first and second order derivatives. Suppose the random vector $(Z(t), Z'(t), Z''(t))$ has the covariance structure
\begin{equation*}
    \begin{split}
        \begin{pmatrix}
            {\rm Var}(Z(t)) & \E[Z(t)Z'(t)] & \E[Z(t)Z''(t)] \\
            \E[Z(t)Z'(t)] & {\rm Var}(Z'(t)) & \E[Z'(t)Z''(t)] \\
            \E[Z(t)Z''(t)] & \E[Z'(t)Z''(t)] & {\rm Var}(Z''(t))
        \end{pmatrix}= \begin{pmatrix}
            1 & 0 & -\lambda_1(t) \\
            0 & \lambda_1(t) & r_1(t) \\
            -\lambda_1(t) & r_1(t) & \lambda_2(t)
        \end{pmatrix},
    \end{split}
\end{equation*}  
where $Z'(t)$ and $Z''(t)$ are the first and second order derivatives of $Z(t)$. $\E[Z(t)Z'(t)] = 0$ and ${\rm Var}(Z'(t)) = -\E[Z(t)Z''(t)]$ can be derived from taking the first and second derivatives on both sides of $\Var(Z(t)) = 1$.

With a minor abuse of notation, we use $\rho$, $\lambda_1$, $\lambda_2$, $r_1$ to represent $\rho(t)$, $\lambda_1(t)$, $\lambda_2(t)$ and $r_1(t)$, and $\sigma_t$, $\sigma_t'$, $\sigma_t''$, $\tilde{\sigma}_t$ to represent $\sigma(t)$, $\sigma'(t)$, $\sigma''(t)$, $\tilde{\sigma}(t)$. Following this notation, we have $X'(t)=\sigma_t'Z(t) + \sigma_tZ'(t)$ and $X''(t)=\sigma_t''Z(t) + 2\sigma_t'Z'(t) + \sigma_tZ''(t)$. $X(t)$, $X'(t)$ and $X''(t)$ are Gaussian distributed with mean 0, and variance-covariance matrix
\begin{equation*}
    \begin{split}
        &\quad \begin{pmatrix}
            {\rm Var}(X(t)) & \E[X(t)X'(t)] & \E[X(t)X''(t)] \\
            \E[X(t)X'(t)] & {\rm Var}(X'(t)) & \E[X'(t)X''(t)] \\
            \E[X(t)X''(t)] & \E[X'(t)X''(t)] & {\rm Var}(X''(t))
        \end{pmatrix}\\
    & = \begin{pmatrix}
            \sigma_t^2 & \sigma_t\sigma_t' & \sigma_t\sigma_t''- \sigma_t^2\lambda_1\\
            \sigma_t\sigma_t' & \sigma_t'^2+ \sigma_t^2\lambda_1 & \sigma_t'\sigma_t''+ \sigma_t\sigma_t'\lambda_1 + \sigma_t^2r_1\\
            \sigma_t\sigma_t''- \sigma_t^2\lambda_1 & \sigma_t'\sigma_t''+ \sigma_t\sigma_t'\lambda_1 + \sigma_t^2r_1 & \sigma_t''^2 + (4\sigma_t'^2-2\sigma_t\sigma_t'')\lambda_1 + \sigma_t^2\lambda_2 + 4\sigma_t \sigma'_tr_1
        \end{pmatrix}\\
    & = \begin{pmatrix}
    a & d & f \\ 
    d & b & e \\ 
    f & e & c
    \end{pmatrix},
    \end{split} 
\end{equation*} 
where $a$, $b$, $c$, $d$, $e$, $f$ are some constants for fixed $t$.  Therefore
\begin{equation*}
\E{[X(t),X''(t)|X'(t) = 0]} = (0,0),
\end{equation*}
and
\begin{align*}
\Var{(X(t),X''(t)|X'(t)=0)} & = \begin{pmatrix}
a & f\\ 
f & c
\end{pmatrix} - \begin{pmatrix}
d \\ 
e
\end{pmatrix}\frac{1}{b}\begin{pmatrix}
d & e
\end{pmatrix}= \begin{pmatrix}
a - \frac{d^2}{b} & f-\frac{de}{b} \\ 
f - \frac{de}{b} & c - \frac{e^2}{b}
\end{pmatrix}.
\end{align*}
According the definition, we have
\begin{align*}
 \rho &:= \Cor(X(t),X''(t)|X'(t) = 0) = \frac{fb-de}{\sqrt{(ab-d^2)(bc-e^2)}} \\
 & = - \frac{\sigma_t^2\lambda_1^2 + (2\sigma_t'^2 - \sigma_t\sigma_t'')\lambda_1 + \sigma_t\sigma_t'r_1}{\sqrt{ \splitfrac{\lambda_1(\sigma_t^2(3\sigma_t'^2 - 2\sigma_t\sigma_t'')\lambda_1^2 + (\sigma_t\sigma_t'' - 2\sigma_t')^2\lambda_1+ \sigma_t^4\lambda_1\lambda_2 + 2\sigma_t^3\sigma_t'\lambda_1 r_1}{ + \sigma_t^2\sigma_t'^2\lambda_2 - \sigma_t^4r_1^2 + (4\sigma_t\sigma_t'^3 - 2\sigma_t^2\sigma_t'\sigma_t'')r_1)} }},
 % & = = -\frac{{\lambda _{1}}^2\,\sigma ^2-\sigma _{2}\,\lambda _{1}\,\sigma +2\,\lambda _{1}\,{\sigma _{1}}^2+r_{1}\,\sigma \,\sigma _{1}}{\sqrt{\lambda _{1}}\,\sqrt{\lambda _{1}\,\sigma ^2+{\sigma _{1}}^2}\,\sqrt{4\,\lambda _{1}\,{\sigma _{1}}^2+\lambda _{2}\,\sigma ^2+{\sigma _{2}}^2-\frac{{\left(r_{1}\,\sigma ^2+\lambda _{1}\,\sigma _{1}\,\sigma +\sigma _{1}\,\sigma _{2}\right)}^2}{\lambda _{1}\,\sigma ^2+{\sigma _{1}}^2}-2\,\lambda _{1}\,\sigma \,\sigma _{2}+4\,r_{1}\,\sigma \,\sigma _{1}}}
 \end{align*}
and
\begin{align*}
 \tilde{\sigma}_t^2 &:= \Var(X(t)|X'(t) = 0) = a - \frac{d^2}{b} = \frac{\sigma_t^4\lambda_1}{\sigma_t'^2 + \sigma_t^2\lambda_1}.
\end{align*}

The peak height density is given by the derivative of \eqref{eq:point}
\[
f_t(x) = \frac{\E[|X''(t)|\mathbbm{1}_{\{X''(t) < 0\}} | X(t)=x, X'(t)=0]p_{X(t)}(x|X'(t)=0)}{\E[|X''(t)|\mathbbm{1}_{\{X''(t) < 0\}} | X'(t)=0]}.
\]
First, we compute the denominator
\begin{align*}
& \E{[|X''(t)|\mathbbm{1}_{\{X''(t) < 0\}}|X'(t) = 0]} = -\int_{-\infty}^{0} \frac{x}{\sqrt{c-\frac{e^2}{b}}} \phi\left(\frac{x}{\sqrt{c - \frac{e^2}{b}}}\right) dx = \frac{\sqrt{c - \frac{e^2}{b}}}{\sqrt{2\pi}}.
\end{align*}

Similarly, $X''(t)|X(t) = x, X'(t) = 0$ is Gaussian distributed with mean
\begin{align*}
 & \E[X''(t)|X(t) = x, X'(t) = 0]=\begin{pmatrix}
 f & e
 \end{pmatrix} \begin{pmatrix}
 a & d\\ 
 d & b
 \end{pmatrix}^{-1} \begin{pmatrix}
 x \\ 
 0  
 \end{pmatrix} =  \frac{fb - ed}{ab - d^2} x=  -\eta x,
\end{align*}
and variance
\begin{align*}
 \Var(X''(t)|X(t) = x, X'(t) = 0)=& c - \begin{pmatrix}
f & e
\end{pmatrix} \begin{pmatrix}
a & d \\ 
d & a 
\end{pmatrix}^{-1}\begin{pmatrix}
f \\
e
\end{pmatrix} \\
= & c - \frac{f^2b - 2fed + ae^2}{ab-d^2} =  c - \xi,
\end{align*}
where $\eta:= (ed - fb)/(ab- d^2)$ and $\xi:= (f^2b - 2fed + ae^2)/(ab-d^2)$. Now the numerator of the Kac-Rice formula becomes
\begin{align*}
& \E[|X''(t)|\mathbbm{1}_{\{X''(t) < 0\}}|X(t) = x, X'(t) = 0]\frac{1}{ \tilde{\sigma}_t}\phi\left(\frac{x}{ \tilde{\sigma}_t}\right)  \\ 
= & -\frac{1}{ \tilde{\sigma}_t}\phi\left(\frac{x}{ \tilde{\sigma}_t}\right)\int_{-\infty}^{0} \frac{y}{\sqrt{c-\xi}}\phi\left(\frac{y+\eta x}{\sqrt{c-\xi}}\right) dy  \\ 
= &  \eta x\Phi\left(\frac{\eta x}{\sqrt{c-\xi}}\right)\frac{1}{\tilde{\sigma}_t}\phi\left(\frac{x}{\tilde{\sigma}_t}\right) + \sqrt{c-\xi}\phi\left(\frac{\eta x}{\sqrt{c-\xi}}\right)\frac{1}{\tilde{\sigma}_t}\phi\left(\frac{x}{\tilde{\sigma}_t}\right).
\end{align*}
%Note that $\eta = \lambda_1$ and $\xi =\lambda_1^2 + r_1^2/\lambda_1$ when $\sigma_t = 1$.
Combine the results above and consider \eqref{def: psi}, we have
\begin{align*}
f_t(x) &= \frac{\eta x\Phi\left(\frac{\eta x}{\sqrt{c-\xi}}\right)\frac{1}{\tilde{\sigma}_t}\phi\left(\frac{x}{\tilde{\sigma}_t}\right) + \sqrt{c-\xi}\phi\left(\frac{\eta x}{\sqrt{c-\xi}}\right)\frac{1}{\tilde{\sigma}_t}\phi\left(\frac{x}{\tilde{\sigma}_t}\right)}{\frac{\sqrt{c - \frac{e^2}{b}}}{\sqrt{2\pi}}} \\ 
& =  \frac{1}{\tilde{\sigma}_t}\phi\left(\frac{x}{\tilde{\sigma}_t}\right) \sqrt{\frac{2\pi(c-\xi)}{c-\frac{e^2}{b}}} \left[\phi\left(\frac{\eta x}{\sqrt{c-\xi}}\right) +  \frac{\eta x}{\sqrt{c-\xi}} \Phi\left(\frac{\eta x}{\sqrt{c-\xi}}\right)\right] \\ 
& = \frac{1}{\tilde{\sigma}_t}\phi\left(\frac{x}{\tilde{\sigma}_t}\right) \sqrt{\frac{2\pi(c-\xi)}{c-\frac{e^2}{b}}} \psi\left(\frac{\eta x}{\sqrt{c-\xi}}\right).
% & = \frac{1}{\sqrt{\sigma_t^2 - \frac{\sigma_t^2\sigma_t'^2}{\sigma_t'^2+ \sigma_t^2\lambda_1}}}\phi\left(\frac{x}{\sqrt{\sigma_t^2 - \frac{\sigma_t^2\sigma_t'^2}{\sigma_t'^2+ \sigma_t^2\lambda_1}}}\right) \\ 
% & \sqrt{\frac{2\pi\sigma_t^2\left(\lambda_2 - \lambda_1^2 - \frac{r_1^2}{\lambda_1}\right)}{\sigma_t''^2 + (4\sigma_t'^2-2\sigma_t\sigma_t'')\lambda_1 + \sigma_t^2\lambda_2 + 4\sigma_t \sigma'_t r_1 - \frac{(\sigma_t'\sigma_t''+ \sigma_t\sigma_t'\lambda_1 + \sigma_t^2r_1)^2}{\sigma_t'^2+ \sigma_t^2\lambda_1}}} \\
% & \psi\left(\frac{\left(\lambda_{1} + \frac{(2\sigma_t'^2 - \sigma_t\sigma_t'')\lambda_1 + \sigma_t\sigma_t'r_1}{\lambda_1\sigma_t^2}\right) x}{\sqrt{\sigma_t^2\left(\lambda_2 - \lambda_1^2 - \frac{r_1^2}{\lambda_1}\right)}}\right).
\end{align*}
Using the fact that
\begin{align*}
    \xi = \tilde{\sigma}_t^2 \eta^2 + \frac{e^2}{b} = \sigma_t^2\lambda_1^2 - 2(\sigma_t\sigma_t'' - 2\sigma_t'^2)\lambda_1 + \frac{r_1^2\sigma_t^2}{\lambda_1} + 4\sigma_t\sigma_t' + \sigma_t''^2,
\end{align*}
we have
\begin{align*}
    1 - \frac{(c-\xi)}{c-\frac{e^2}{b}} & = \frac{\xi - \frac{e^2}{b}}{c - \frac{e^2}{b}} = \frac{\tilde{\sigma}_t^2 \eta^2}{c - \frac{e^2}{b}} = \frac{(fb - dc)^2}{(ab-d^2)(bc-e^2)} = \rho^2,
\end{align*}
and 
\begin{align*}
    1 + \tilde{\sigma}_t^2\frac{\eta^2}{c - \xi} & = \frac{c-\frac{e^2}{b}}{c - \xi} = \frac{1}{1-\rho^2}.
\end{align*}
Therefore, the peak height density can be further simplified as  
\begin{align*}
    f_t(x) = \frac{1}{\tilde{\sigma}_t}\phi\left(\frac{x}{\tilde{\sigma}_t}\right) \sqrt{2\pi(1-\rho^2)}\psi\left(\frac{-\rho x}{\sqrt{1-\rho^2}\tilde{\sigma}_t}\right).
\end{align*}
% Generally, these parameters are 
% \begin{align*}
% a - \frac{d^2}{b} & = \sigma_t^2 - \frac{\sigma_t^2\sigma_t'^2}{\sigma_t'^2+ \sigma_t^2\lambda_1}\\
% c - \xi & = \sigma_t^2\left(\lambda_2 - \lambda_1^2 - \frac{r_1^2}{\lambda_1}\right) \\ 
% c - \frac{e^2}{b} & = \sigma_t''^2 + (4\sigma_t'^2-2\sigma_t\sigma_t'')\lambda_1 + \sigma_t^2\lambda_2 + 4\sigma_t \sigma'_t r_1 - \frac{(\sigma_t'\sigma_t''+ \sigma_t\sigma_t'\lambda_1 + \sigma_t^2r_1)^2}{\sigma_t'^2+ \sigma_t^2\lambda_1} \\ 
% \eta 
% % & = \lambda_{1} + \frac{2\,\lambda_{1}\,{\sigma_t'}^2-\sigma_t \,\left(\lambda_{1}\,\sigma_t''-r_{1}\,\sigma _t'\right)}{\lambda_{1}\,\sigma_t^2}
% & = \lambda_{1} + \frac{(2\sigma_t'^2 - \sigma_t\sigma_t'')\lambda_1 + \sigma_t\sigma_t'r_1}{\lambda_1\sigma_t^2}
% \end{align*}

The mean and variance of the peak height can be derived from its density function. Using the properties of Gaussian integrals (\citealp{Owen1980}), we have
\begin{align*}
    \int_{-\infty}^{\infty} xf_t(x) dx
    = & \int_{-\infty}^{\infty} \sqrt{2\pi(1-\rho^2(t))} \frac{x}{\tilde{\sigma}(t)}\phi\left(\frac{x}{\tilde{\sigma}(t)}\right) \phi\left(\frac{-\rho(t)x}{\sqrt{1-\rho^2(t)}\tilde{\sigma}(t)}\right)dx \\
    & - \int_{-\infty}^{\infty} \sqrt{2\pi(1-\rho^2(t))} \frac{x}{\tilde{\sigma}(t)}\phi\left(\frac{x}{\tilde{\sigma}(t)}\right) \frac{\rho(t)x}{\sqrt{1-\rho^2(t)}\tilde{\sigma}(t)}\Phi\left(\frac{-\rho(t)x}{\sqrt{1-\rho^2(t)}\tilde{\sigma}(t)}\right)dx \\
    = & - \sqrt{2\pi}\rho(t)\tilde{\sigma}(t) \int_{-\infty}^{\infty} x^2\phi(x)\Phi\left(\frac{-\rho(t)x}{\sqrt{1-\rho^2(t)}}\right) dx \\
    = & -\sqrt{\frac{\pi}{2}}\rho(t)\tilde{\sigma}(t),
\end{align*}
and
\begin{align*}
    \int_{-\infty}^{\infty} x^2f_t(x) dx
    = & \sqrt{2\pi(1-\rho^2(t))}\tilde{\sigma}^2(t) \int_{-\infty}^{\infty} x^2\phi(x)\phi\left(\frac{-\rho(t)x}{\sqrt{1-\rho^2(t)}}\right) dx \\
    & - \sqrt{2\pi}\rho(t)\tilde{\sigma}^2(t) \int_{-\infty}^{\infty} x^3\phi(x)\Phi\left(\frac{-\rho(t)x}{\sqrt{1-\rho^2(t)}}\right) dx \\
    = & (1-\rho^2(t))^2\tilde{\sigma}^2(t) - \rho^2(t)(\rho^2(t) -3)\tilde{\sigma}^2(t) \\
    = & (\rho^2(t) + 1)\tilde{\sigma}^2(t).
\end{align*}
The mean and variance are direct results of the first and second moments.

\end{proof}

\begin{proof}[Proof of Proposition \ref{prop:boundary}]

Since the derivation for $\rho(t) = -1$ and $\rho(t) = 1$ follows a similar approach, here we only illustrate the method for $\rho(t) = -1$. With $\rho(t) = \Cor(X(t), X''(t)|X'(t) = 0) = -1$, the numerator of the Kac-Rice formula \eqref{eq:point} becomes
\begin{align*}
    & \E[|X''(t)|\mathbbm{1}_{\{X''(t) < 0 \}}\mathbbm{1}_{\{X(t)>u\}} |  X'(t)=0] \\
    = & \E[|X''(t)|\mathbbm{1}_{\{X''(t)< - \sqrt{\Var(X''(t)|X'(t)=0)/\Var(X(t)|X'(t)=0)}u\}} |  X'(t)=0] \\
    = &\int_{-\infty}^{-\sqrt{\Var(X''(t)|X'(t)=0)/\tilde{\sigma}(t)}u} \frac{x}{\sqrt{\Var(X''(t)|X'(t)=0)}}\phi\left(\frac{x}{\sqrt{\Var(X''(t)|X'(t)=0)}}\right) dx\\
    = & \sqrt{\Var(X''(t)|X'(t)=0)}\phi\left(\frac{u}{\tilde{\sigma}(t)}\right).
\end{align*}   
The denominator is the same as the non-boundary case
\begin{align*}
& \E{[|X''(t)|\mathbbm{1}_{\{X''(t) < 0\}}|X'(t) = 0]} = \frac{\sqrt{\Var(X''(t)|X'(t)=0)}}{\sqrt{2\pi}}.
\end{align*}
The peak height distribution  
\begin{align*}
    F_t(u) &= \frac{\E[|X''(t)|\mathbbm{1}_{\{X''(t) < 0 \}}\mathbbm{1}_{\{X(t)>u\}} |  X'(t)=0]}{\E{[|X''(t)|\mathbbm{1}_{\{X''(t) < 0\}}|X'(t) = 0]}} = 2\pi\phi\left(\frac{u}{\tilde{\sigma}(t)}\right).
\end{align*}
Taking derivative, we obtain
\begin{equation*}
   f_t(x) = \frac{\sqrt{2\pi}x}{\tilde{\sigma^2}(t)}\phi\left(\frac{x}{\tilde{\sigma}(t)}\right).  
\end{equation*}
\end{proof}

\begin{proof}[Proof of Theorem \ref{thm:linear_nu}]
Take the derivative on both sides of \eqref{def:non_constant_band}, we obtain
\begin{equation*}
X'(t) = \int_{-\infty}^{\infty} \left[-\frac{\nu'(t)}{2\nu^{3/2}(t)}k\left(\frac{t-s}{\nu(t)}\right) + k'\left(\frac{t-s}{\nu(t)}\right)\frac{\nu(t)-(t-s)\nu'(t)}{\nu^{5/2}(t)} \right] d B(s).
\end{equation*}
Using the property of Wiener Integral and applying the change of variable $u = (t-s)/\nu(t)$, we have
\begin{align*}
\Var{(X'(t))} &= \int_{-\infty}^{\infty}\left[-\frac{\nu'(t)}{2\nu^{3/2}(t)}k\left(\frac{t-s}{\nu(t)}\right) + k'\left(\frac{t-s}{\nu(t)}\right)\frac{\nu(t)-(t-s)\nu'(t)}{\nu^{5/2}(t)} \right]^2 ds \\
& = \int_{-\infty}^{\infty} \left[-\frac{\nu'(t)}{2\nu(t)}k(u) + \frac{1-u\nu'(t)}{\nu(t)}k'(u)\right]^2 du \\
& = \frac{1}{\nu^2(t)}\int_{-\infty}^{\infty} \left[-\frac{1}{2}\nu'(t)k(u) + (1-u\nu'(t))k'(u) \right]^2 du\\
& = \frac{1}{\nu^2(t)}C_1,
\numberthis
\label{prop2: lambda_1}
\end{align*}
where $C_1$ is a constant when $\nu(t)$ is a linear function of t.

Taking the derivative again, we have
\begin{align*}
X''(t) = & \int_{-\infty}^{\infty} -\frac{1}{2}\left[\frac{2\nu''(t)\nu^{3/2}(t)-3\nu'^2(t)\nu^{1/2}(t)}{2\nu^3(t)}k\left( \frac{t-s}{\nu(t)}\right) \right. \\
& + \frac{\nu'(t)\nu(t)-(t-s)\nu'^2(t)}{\nu^{7/2}(t)} k'\left(\frac{t-s}{\nu(t)} \right) +  \frac{(\nu(t)-(t-s)\nu'^2(t))}{\nu^{9/2}(t)} k''\left(\frac{t-s}{\nu(t)}\right) \\
& \left. - \frac{2(t-s)\nu''(t)\nu^{5/2}(t)+5\nu'(t)\nu^{5/2}(t)-5(t-s)\nu'^2(t)\nu^{3/2}(t)}{\nu^5(t)} \right] dB(s).
\end{align*}
Again, let $u = (t-s)/\nu(t)$ and use the fact $\nu''(t) = 0$ and $\nu'(t)$ independent of $t$  when $\nu(t)$ is linear,
\begin{align*}
\Var{(X''(t))} & = \frac{1}{\nu^4(t)}C_2
\numberthis
\label{prop2: lambda_2}
\end{align*}
for some constant $C_2$. Similar to \eqref{prop2: lambda_1} and \eqref{prop2: lambda_2},
\begin{align*}
\E{[X'(t)X''(t)]} & = \frac{1}{\nu^3(t)}C_3,
\numberthis
\label{prop2: r_1}
\end{align*}
where $C_3$ is a constant when $\nu(t)$ is linear. Combining \eqref{prop2: lambda_1}, \eqref{prop2: lambda_2} and \eqref{prop2: r_1},
\begin{align*}
\rho(t) &= -\frac{\Var{(X'(t))}}{\sqrt{\Var{(X''(t))} - \frac{\E[X'(t)X''(t)]^2}{\Var{(X'(t))}}}} = \frac{C_1}{\sqrt{C_2-\frac{C_3^2}{C_1}}}.
\end{align*}

\end{proof}

\begin{proof}[Proof of \eqref{eq:rho_gauss}]
We first compute $\Var{(X'(t))}$, $\Var{(X''(t))}$ and $\E[X'(t)X''(t)]$. Taking the first and second order derivative of $X(t)$, we obtain
\begin{equation*}
    \begin{split}
    X'(t) = \sqrt{2}\pi^{1/4}\int_{-\infty}^\infty \left[-\frac{\nu'(t)}{2\nu^{3/2}(t)} - \frac{t-s}{\nu^{5/2}(t)} + \frac{3(t-s)^2\nu'(t)}{\nu^{7/2}(t)}\right]\phi\left(\frac{t-s}{\nu(t)}\right)\,dB(s),
    \end{split}
\end{equation*} 
and 
\begin{align*}
X''(t) = &\sqrt{2}\pi^{1/4}\int_{-\infty}^\infty \left[ \frac{\nu'^2(t)}{\nu^{13/2}(t)}(t-s)^4 - \frac{2\nu'(t)}{\nu^{11/2}(t)}(t-s)^3 + \left(\frac{\nu''(t)}{\nu^{7/2}(t)} + \frac{2-9\nu'^2(t)}{2\nu^{9/2}(t)} \right) (t-s)^2 \right. \\
 &\left. + \frac{5\nu'(t)}{\nu^{7/2}(t)}(t-s) -\frac{\nu''(t)}{2\nu^{3/2}(t)} + \frac{3\nu'^2(t)}{4\nu^{5/2}(t)}-\frac{1}{\nu^{5/2}(t)} \right] \phi\left(\frac{t-s}{\nu(t)} \vphantom{\frac{1}{\nu^{5/2}(t)}} \right)\,dB(s).
\end{align*}
$\Var{(X'(t))}$, $\Var{(X''(t))}$ and $\E[X'(t)X''(t)]$ can be computed using the property of Wiener integral:
\begin{align*}
\Var{(X'(t))} &= \frac{1}{\pi^{1/2}} \int_{-\infty}^{\infty} \left[-\frac{\nu'(t)}{2\nu^{3/2}(t)} - \frac{t-s}{\nu^{5/2}(t)} + \frac{3(t-s)^2\nu'(t)}{\nu^{7/2}(t)}\right]^2 e^{-\frac{(t-s)^2}{\nu^2(t)}} ds
= \frac{1+\nu'^2(t)}{2\nu^2(t)}.
% & = \frac{1}{\pi^{1/2}} \int_{-\infty}^{\infty}\left(\frac{\nu'(t)^2}{\nu(t)^7}u^4-\frac{2\nu'(t)}{\nu(t)^6}u^3 + \frac{1-\nu'(t)^2}{\nu(t)^5}u^2 + \frac{\nu'(t)}{\nu(t)^4}u + \frac{\nu'(t)^2}{4\nu(t)^3}\right)e^{-\frac{u^2}{\nu(t)^2}} du \\
% & = \frac{\nu'(t)^2}{\nu(t)^6}\frac{3\nu(t)^4}{4} + \frac{1-\nu'(t)^2}{\nu(t)^4}\frac{\nu(t)^2}{2}+\frac{\nu'(t)^2}{4\nu(t)^2}\\
\end{align*}
Similarly
\begin{align*}
\Var{(X''(t))} & = \frac{51\nu'^4(t)+68\nu'^2(t)+12}{16\nu^4(t)} - \frac{\nu''(t)(3\nu'^2(t)-2)}{2\nu^3(t)} + \frac{\nu''^2(t)}{2\nu^2(t)},
\end{align*}
and
\begin{align*}
\E{[X'(t)X''(t)]} = -\frac{\nu'^3(t)+ \nu'(t)}{2\nu^3(t)} + \frac{\nu'(t)\nu''(t)}{2\nu^2(t)}.
\end{align*}
Combine the results above and apply \eqref{eq:rho_constant_sigma_vcov}, we have
\begin{align*}
\rho(t) &= -\frac{1}{2}\frac{\nu'^2(t)+1}{ \sqrt{ \splitfrac{\frac{51}{16}\nu'^4(t)+(\frac{68}{16}-\frac{3\nu''(t)\nu(t)}{2})\nu'^2(t)+\frac{3}{4}+\nu''(t)\nu(t)+\frac{1}{2}\nu''^2(t)\nu^2(t)}{-\frac{(-\nu'^3(t)+(\nu''(t)\nu(t)-1)\nu'^2(t))}{2(\nu'^2(t)+1)}}}} \\
& = -\frac{1}{2}\sqrt{\frac{(\nu'^2(t)+1)^{3}}{\splitfrac{\frac{43}{16}\nu'^6(t) + (\frac{103}{16}-\frac{1}{2}\nu''(t)\nu(t))\nu'^4(t) + \frac{9}{2}\nu'^2(t)+\frac{3}{4}}{+\nu''(t)\nu(t)+\frac{1}{2}\nu''^2(t)\nu^2(t)}}}.
\end{align*}
\end{proof}

\begin{proof}[Proof of Theorem \ref{thm:scale_space}]
For convenience, apply the change of variable $v = -\log\nu$. The covariance function %and let $X'_{t_i} = \partial X/ \partial t_i$, $X'_v = \partial X/ \partial v$, $X''_t = \partial^2 X/ \partial t^2$, $X''_v = \partial^2 X/ \partial v^2$, $X''_tv = \partial^2 X/ \partial t \partial v$. The covariance matrix of (X,)
\begin{equation*}
\E{[X(t,v)X(\tilde{t},\tilde{v})]} = e^{N(v+\tilde{v})/2} \int k((s-t)e^{v}) k((s-\tilde{t})e^{\tilde{v}}) ds.
\end{equation*}
% Define $\rho(\cdot): [0,\infty) \to \mathbb{R}$:
% \begin{equation}
% \rho(||t-s||^2) = \int k(h-(t-s))k(h) dh
% \end{equation}
% and denote
% \begin{equation}
% \rho' = \rho'(0), \quad \rho'' = \rho''(0), \quad \eta=\sqrt{-\rho'}/\sqrt{\rho''}, \quad \kappa = -\rho'/\sqrt{\rho''}
% \end{equation}

At a fixed point, the first-order partial derivatives of $X$ are Gaussian distributed with mean 0. The variances and covariances can be derived by applying (5.5.5) in \citet{Adler2007}
\begin{align*}
\Var{\left(\frac{\partial X}{\partial t}\right)} &= e^{2v}\int \dot{k}(s)\dot{k}(s)^T dh,\\
\Var{\left(\frac{\partial X}{\partial v}\right)} &= \int \left(s^T\dot{k}(s) + \frac{N}{2}k(s)\right)^2 ds,\\
\E{\left[\frac{\partial X}{\partial t}\frac{\partial X}{\partial v}\right]} &  = 0,
\end{align*}
where $\dot{k}(s)$ denotes $\partial k(s)/\partial s$. We define a random vector $G = (G_i)_{1 \leq i \leq N+1}$ such that

\begin{equation*}
 G_i = \begin{cases}
 \frac{1}{e^v} \frac{\partial X}{\partial t_i} & i \leq N\\
 \frac{\partial X}{\partial v} & i = N+1
 \end{cases}.
 \end{equation*} 
Note that the distribution of $G$ is independent of $(t,v)$.

The second-order partial derivatives of $X$, i.e. entries of the Hessian matrix $\nabla^2 X$ are also Gaussian distributed with mean 0. 
\begin{align*}
\frac{\partial^2 X}{\partial t \partial t^T} = e^{2v}A, \quad \frac{\partial^2 X}{\partial t \partial v} = e^{v}B, \quad \frac{\partial^2 X}{\partial v^2} = C,
% \Cov{(\frac{\partial^2 X}{\partial t_i \partial t_j}, \frac{\partial^2 X}{\partial t_k \partial t_l})} &  = e^{4v}D_5\\
% \Cov{(\frac{\partial^2 X}{\partial t_i \partial t_j}, \frac{\partial^2 X}{\partial t_k \partial v})} &  = 0\\
% \Cov{(\frac{\partial^2 X}{\partial t_i \partial t_j}, \frac{\partial^2 X}{ \partial v^2})} &  = \delta_{ij}D_6\\
% \Cov{(\frac{\partial^2 X}{\partial t_i \partial v}, \frac{\partial^2 X}{\partial v^2})} &  = 0 \\
\end{align*}
where the distribution of $A$ is proportional to $\GOI{(1/2)}$ (\citealp{Bernoulli}). The variance covariance matrix for vector $B$ and scalar $C$ are
\begin{align*}
\Var{(B)} &= \int \left(\ddot{k}(s)s + \left(\frac{N}{2} + 1\right)\dot{k}(s) \right) \left(\ddot{k}(s)s + \left(\frac{N}{2} + 1\right)\dot{k}(s)\right)^Tds,\\
\Var{(C)} &  =  \int \left( s^T\ddot{k}(s)s + (N+1)s^T\dot{k}(s) + \frac{N^2}{4}k(s) \right)^2 ds,
\end{align*}
and the covariance between $A$, $B$ and $C$ does not depend on the parameter $(t,v)$.

\begin{equation*}
\nabla^2 X(t,v) = \begin{bmatrix}
  (\frac{\partial^2 X}{\partial t_i \partial t_j})_{1\leq i,j \leq N}
 &  ( \frac{\partial^2 X}{\partial t_i \partial v})_{1\leq i\leq N} \\
  ( \frac{\partial^2 X}{\partial v \partial t_j})_{1\leq j\leq N} & \frac{\partial^2 X}{\partial v^2} 
\end{bmatrix} 
= \begin{bmatrix}
e^{2v}A & e^{v}B \\
e^{v}B & C
\end{bmatrix},
% \begin{cases}
% \frac{1}{e^{2v}}\frac{\partial^2 X}{\partial t_i \partial t_j} & 1 \leq i \leq j \leq N-1\\
% \frac{1}{e^{v}} \frac{\partial^2 X}{\partial t_i \partial v} & 1 \leq i \leq N-1, j = N\\
% \frac{\partial^2 X}{\partial v^2} & i = N, j = N
% \end{cases}
\end{equation*}
and define the matrix $H$ as 
\begin{equation*}
H = \begin{bmatrix}
A & B \\
B & C
\end{bmatrix}.
\end{equation*}
The distribution of $H$ is again independent of $(t,v)$ like $G$. Similarly, the covariance between $X$ and the entries of $G$, the covariance between $X$ and the entries of $H$, and the covariance between the entries of $G$ and the entries of $H$ are all independent of $(t,v)$. Therefore, the joint distribution of $(X,G,H)$ is independent of $(t,v)$. 

Next, we have
\begin{align*}
\det \nabla^2 X(t,v) &= \det{(e^{2v}A)}\det{(C - e^vB^Te^{-2v}A^{-1}e^vB)}\\
& = e^{2Nv}\det{(A)}\det{(C - B^TA^{-1}B)}\\
& = e^{2Nv}\det{(H)}.
\end{align*} 
Note that the determinant of the $k$th principal minor of $\nabla^2 X(t,v)$ is equal to a positive number times the determinant of the $k$th principal minor of $H$. By Sylvester's criterion, $\nabla^2 X(t,v)$ being negative definite is equivalent to $H$ being negative definite. Applying the Kac-Rice formula \eqref{eq:point}, the peak height distribution of $X(t,v)$ for any fixed point
\begin{equation*}
\begin{split}
F_{t,v}(u)&=\frac{\E[|{\rm det} \nabla^2 X(t,v)|\mathbbm{1}_{\{\nabla^2 X(t,v) \prec 0\}}\mathbbm{1}_{\{X(t,v)>u\}} | \nabla X(t,v)=0\}}{\E\{|{\rm det} \nabla^2 X(t,v)|\mathbbm{1}_{\{\nabla^2 X(t,v) \prec 0\}} | \nabla X(t,v)=0]} \\
& = \frac{\int_{x}^{\infty}\E[|{\rm det} \nabla^2 X(t,v)|\mathbbm{1}_{\{\nabla^2 X(t,v) \prec 0\}}| X(t,v) = x, \nabla X(t,v)=0] \phi(x) dx}{\E[|{\rm det} \nabla^2 X(t,v)|\mathbbm{1}_{\{\nabla^2 X(t,v) \prec 0\}} | \nabla X(t,v)=0]} \\
& = \frac{e^{2Nv}\E[|{\rm det} H|\mathbbm{1}_{\{H \prec 0\}}\mathbbm{1}_{\{X>u\}} | G=0]}{e^{2Nv}\E[|{\rm det} H|\mathbbm{1}_{\{H \prec 0\}} | G=0]} \\
& = \frac{\E[|{\rm det} H|\mathbbm{1}_{\{H \prec 0\}}\mathbbm{1}_{\{X>u\}} | G=0]}{\E[|{\rm det} H|\mathbbm{1}_{\{H \prec 0\}} | G=0]}.
\end{split}
\end{equation*} 
\end{proof}

\end{appendices}

\bibliographystyle{myjmva}
\bibliography{ref_grf}

\end{document}